\newcommand{\ignore}[1]{}%
\newcommand{\INPUT}{\item[{\bf Input:}]}
\renewcommand{\algorithmiccomment}[1]{\bgroup\hfill\footnotesize~#1\egroup}
\tikzset{
block/.style={
    rectangle,
    draw,
    text width=13em,
    text centered,
    rounded corners
},
rectangle connector/.style={
    connector,
    to path={(\tikztostart) -- ++(#1,0pt) \tikztonodes |- (\tikztotarget) },
    pos=0.5
},
rectangle connector/.default=-2cm,
straight connector/.style={
    connector,
    to path=--(\tikztotarget) \tikztonodes
}
}
\let\emptyset\varnothing
\newcommand{\problemtitle}[1]{\gdef\@problemtitle{#1}}%
\newcommand{\probleminput}[1]{\gdef\@probleminput{#1}}%
\newcommand{\problemoutput}[1]{\gdef\@problemoutput{#1}}%
  \par\addvspace{.5\baselineskip}
  \par\addvspace{.5\baselineskip}
\theoremstyle{plain}
\newtheorem{theorem}{Theorem}
\newtheorem*{theorem*}{Theorem}
\newtheorem{lemma}[theorem]{Lemma}
\newtheorem{claim}[theorem]{Claim}
\newtheorem{proposition}[theorem]{Proposition}
\newtheorem{corollary}[theorem]{Corollary}
\theoremstyle{definition}
\newtheorem{definition}{Definition}
\theoremstyle{remark}
\newtheorem*{remark}{Remark}
\newcommand{\poly}{\operatorname{poly}}
\newcommand{\ED}{\operatorname{ED}}
\newcommand{\eps}{\epsilon}
\newcommand{\Ham}{\operatorname{Ham}}
\newcommand{\method}{\operatorname}
\newcommand{\OUTPUT}{\\\hspace{-0.15in}\textbf{Output:} }
\newcommand{\MaxAlign}{\method{MaxAlign}}
\newcommand{\MaxShiftAlign}{\method{MaxShiftAlign}}
\newcommand{\True}{{\textbf{True}}}
\newcommand{\tO}{{\tilde{O}}}
\title{A Simple Sublinear Algorithm for Gap Edit Distance}
\author{Joshua Brakensiek\thanks{Stanford University, partially supported by an NSF Graduate Research Fellowship.} \and Moses Charikar\thanks{Stanford University, supported by a Simons Investigator Award, a Google Faculty Research Award and an Amazon Research Award.}\and Aviad Rubinstein\thanks{Stanford University.}}
\date{}
\begin{document}

\maketitle
\begin{abstract}

We study the problem of estimating the edit distance between two $n$-character strings.
While exact computation in the worst case is believed to require near-quadratic time, previous work showed that in certain regimes it is possible to solve the following {\em gap edit distance} problem in {\em sub-linear time}: distinguish between inputs of distance $\le k$ and $>k^2$. 
Our main result is a very simple algorithm for this benchmark that runs in time $\tO(n/\sqrt{k})$, and in particular settles the open problem of obtaining a truly sublinear time for the entire range of relevant $k$. %

Building on the same framework, we also obtain a $k$-vs-$k^2$ algorithm for the one-sided preprocessing model with $\tO(n)$ preprocessing time and $\tO(n/k)$ query time (improving over a recent $\tO(n/k+k^2)$-query time algorithm for the same problem~\cite{GRS20}).

 \end{abstract}

\setcounter{page}{0}
\thispagestyle{empty}
\pagebreak

\section{Introduction}\label{sec:intro}

We study the problem of estimating the edit distance between two $n$-character strings.
There is a classic $O(n^2)$ dynamic programming algorithm, and fine-grained complexity results from recent years suggest that it is nearly optimal~\cite{BK15-LCS,BI18,AHWW16-polylog_shaved,AB18}.
There have been long lines of works on beating the quadratic time barrier with approximations~\cite{BEKMRRS03-edit-testing,BJKK04-edit,BES06-edit,AO12-edit,AKO10-edit,BEGHS18-edit-quantum,CDGKS18,CGKK18-ED,HSSS19,RSSS19,BR20, KS20,GRS20,RS20,AN20}, or beyond-worst case~\cite{Ukkonen85,Apostolico86,Meyers86,LMS98,AK12,ABBK17,BK18,Kuszmaul19,HRS19,BSS20}. 
Motivated by applications where the strings may be extremely long (e.g.~bioinformatics), we are interested in algorithms that run even faster, namely in  sub-{\em linear} time. For exact computation in the worst case, this is unconditionally impossible --- even distinguishing between a pair of identical strings and a pair that differs in a single character requires reading the entire input. 
But in many regimes sublinear algorithms are still possible~\cite{BEKMRRS03-edit-testing,BJKK04-edit,CK06,AN10, AO12-edit,SS17, GKS19,NRRS19,BCLW19,RSSS19}.

\subsubsection*{Gap Edit Distance: $k$ vs $k^2$}
We give new approximation algorithms for edit distance that run in sublinear time when the input strings are close. 
To best understand our contribution and how it relates to previous work, we focus on the benchmark advocated by~\cite{GKS19} of distinguishing input strings whose edit distance is $\le k$ from $\gtrsim k^2$; we discuss more general parameters later in this section. 
Notice that we can assume wlog that $k < \sqrt{n}$ (otherwise the algorithm can always accept). Furthermore, for tiny $k$ there is an unconditional easy lower bound of $\Omega(n/k^2)$ for distinguishing even identical strings from ones with $k^2$ substitutions. So our goal is to design an algorithm that runs in truly sublinear time for $1 \ll k < \sqrt{n}$. 

There are two most relevant algorithms in the literature for this setting:
\begin{itemize}
\item \cite{AO12-edit} (building on~\cite{OR07-edit}) gave an algorithm that runs in time $n^{2+o(1)}/k^3$; in particular, it is sublinear for $k \gg n^{1/3}$.
\item \cite{GKS19} gave an algorithm that runs in time $\tO(n/k + k^3)$; in particular, it is truly sublinear for $k \ll n^{1/3}$.
\end{itemize}
In particular, \cite{GKS19} left as an open problem obtaining a sublinear algorithm for $k\approx n^{1/3}$. 

Our main result is a very simple algorithm that runs in time $\tO(n/\sqrt{k})$ and hence is simultaneously sublinear for all relevant values of $k$. 

\begin{theorem*}[Main result (informal); see Theorem~\ref{thm:no-preprocessing}]
  We can distinguish between $\ED(A, B) \le k$ and $\ED(A, B) = \Omega(k^2)$ in $\tilde{O}(n/\sqrt{k})$ time with high probability.
\end{theorem*}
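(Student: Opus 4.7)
The plan is a window-based sampling algorithm. Fix a window length $w = \lceil \sqrt{k} \rceil$ and call a position $i \in [n-w]$ \emph{good} if $A[i,\, i+w]$ equals $B[j,\, j+w]$ exactly for some $j$ with $|j-i| \le k$. The algorithm samples $s = \tilde{\Theta}(n/k^{3/2})$ positions uniformly at random from $[n-w]$ and tests goodness via Karp--Rabin rolling hashes: for each sampled $i$, read the $w$ characters of $A[i, i+w]$ and the surrounding $B$-window of length $O(k)$, then compare the hash of the $A$-window against the $O(k)$ rolling hashes of length-$w$ substrings of $B$. Each test costs $O(k+w) = O(k)$ time, so the total running time is $\tilde{O}(s \cdot k) = \tilde{O}(n/\sqrt{k})$. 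Output ``close'' iff the number of bad samples is below a threshold placed between the two expectations derived below.

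In the close case $\ED(A, B) \le k$, fix an optimal alignment. At every $i$, the cumulative shift $\delta_i$ satisfies $|\delta_i| \le k$, so any position whose $w$-window is untouched by an edit is automatically good, since then $A[i, i+w] = B[i+\delta_i, i+\delta_i+w]$. Each of the $\le k$ edits kills at most $w$ windows, so the bad-fraction is $O(kw/n) = O(k^{3/2}/n)$, giving expected bad-count $\tilde{O}(1)$ over the sample.

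In the far case $\ED(A, B) \ge k^2$, the key structural lemma asserts that the bad-fraction is at least $\Omega(k^2/n)$. In contrapositive form: if at most $\epsilon n$ positions are bad then $\ED(A,B) = O(\epsilon n + k)$; plugging in $\epsilon = c k^2/n$ for a small constant $c$ contradicts $\ED \ge k^2$, so $\epsilon \ge \Omega(k^2/n)$ and the expected bad-count is $\tilde{\Omega}(\sqrt{k})$. Combining both bounds, a standard Chernoff bound on the Binomial$(s,p)$ count reliably separates $\tilde{O}(1)$ from $\tilde{\Omega}(\sqrt{k})$ with high probability, and the thresholded algorithm is correct.

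The main obstacle will be the structural lemma for the far direction. The natural attempt---``stitch the matches witnessed by good positions into an LCS of length $\ge n - O(\epsilon n + k)$ and apply LCS-to-ED conversion''---has two subtleties: the witnessed shifts $\delta_i = j_i - i$ for nearby good positions need not be monotone in $i$, and a length-$w$ window may match multiple $B$-windows so the witnessed shift is not uniquely determined. I expect the honest argument to pick a sparse set of non-overlapping good windows at spacing $\Theta(k)$, argue that across such a spacing the witnessed shifts cannot drop by more than the spacing itself (otherwise the $\pm k$ shift budget is violated) and hence the selected matches can be made monotone, and finally bound the uncovered regions and shift-transition costs against the edit-distance hypothesis. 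This combinatorial step---rather than the sampling-and-hashing top layer---is where I expect the conceptual work to reside.
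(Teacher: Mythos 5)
Your close-case analysis and the runtime accounting are fine, but the far-case structural lemma --- ``if at most $\eps n$ positions are bad then $\ED(A,B)=O(\eps n + k)$'' --- is false, and the thresholded algorithm built on it fails. Counterexample: let all characters of $A$ be distinct, partition a prefix of $A$ into $82k$ consecutive blocks of length $k$, and let $B$ be $A$ with each of the $41k$ adjacent block pairs transposed (the rest of $B$ identical to $A$). Every length-$w$ window of $A$ lying inside a block still occurs verbatim in $B$ at shift exactly $\pm k$, hence is good; only the $O(k\cdot w)=O(k^{3/2})$ positions straddling block boundaries are bad. Yet $\ED(A,B)\ge n-\LCS(A,B)=41k^2>40k^2$, because with distinct characters the LCS is the longest increasing subsequence of the block permutation and each transposed pair forfeits $k$. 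So a far instance can have bad fraction $\Theta(k^{3/2}/n)$ --- the same order as your close-case upper bound $O(kw/n)$, which is attained by $k$ isolated substitutions --- and therefore no placement of the threshold on the bad-sample count can separate the two cases. In particular your far-case claim that $\eps\ge\Omega(k^2/n)$ is off by a $\sqrt{k}$ factor that cannot be recovered.

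The failure is precisely at the step you flagged. The repair you sketch --- ``across a spacing of $\Theta(k)$ the witnessed shifts cannot drop by more than the spacing itself, otherwise the $\pm k$ budget is violated'' --- is not true: in the example above, consecutive good windows at spacing $k$ witness shifts $+k$ and then $-k$, a drop of $2k$, while each shift individually lies comfortably in $[-k,k]$. When the shift reverses like this you must discard $\Theta(k)$ characters of matched material to restore monotonicity, not $O(w)$, and that is exactly where $\Theta(k)$ of edit distance per transposed pair hides without producing bad windows. This is why the paper does not take a window-census approach at all: $\method{GreedyMatch}$ walks monotonically through $B$, extends a maximal match from the current position via $\MaxAlign_k$, and charges $O(k)$ edit cost per greedy step with only $O(k)$ steps (Lemma~\ref{lem:jagged-match}, using the decomposition Lemma~\ref{lem:decomp} for the close case), so the re-alignment cost of a shift reversal is paid explicitly once per step. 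The $\tO(n/\sqrt{k})$ running time then comes from a different source --- writing an arbitrary shift $c\in[-k,k]$ as $a\sqrt{k}+b$ and hashing $O(\sqrt{k})$ shifted subsamples of each string --- rather than from subsampling windows.
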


Our algorithm is better than~\cite{AO12-edit,GKS19} for $n^{2/7} \ll k \ll n^{2/5}$ (and is also arguably simpler than both).

\paragraph{Independent work of Kociumaka and Saha}
The open problem of Goldenberg, Krautgamer, and Saha~\cite{GKS19} 
was also independently resolved by Kociumaka and Saha~\cite{KS20sublinear}.
They use essentially the same main algorithm (Algorithm~\label{alg:jaggedmatch} below), but use substantially different techniques to implement approximate queries to the subroutine we call $\method{MaxAlign}_k$. Their running time ($\tO(n/k+k^2)$) is faster than ours in the regime where our algorithm is faster than~\cite{AO12-edit}. 

\subsubsection*{Edit distance with preprocessing: results and technical insights}
Our starting point for this paper is the recent work of~\cite{GRS20} that designed algorithms for {\em edit distance with preprocessing}, namely the algorithm consists of two phases:
\begin{description}
\item[Preprocessing] where each string is preprocessed separately; and 
\item[Query] where the algorithm has access to both strings and outputs of the preprocess phase.
\end{description}
A simple and efficient preprocessing procedure proposed by~\cite{GRS20} is to compute a hash table for every contiguous substring. In the query phase, this enables an $O(\log(n))$-time implementation of a subroutine that given indices $i_A,i_B$ returns the longest common (contiguous) substring of $A,B$ starting at indices $i_A,i_B$ (respectively).
We use a simple modification of this subroutine, that we call $\method{MaxAlign}_k$: given only an index $i_B$ for string $B$, it returns the longest common (contiguous) substring of $A,B$ starting at indices $i_A,i_B$ (respectively) for any $i_A \in [i_B-k,i_B+k]$. (It is not hard to see that for $k$-close strings, we never need to consider other choices of $i_A$~\cite{Ukkonen85}.)

Given access to a $\method{MaxAlign}_k$ oracle, we obtain the following simple greedy algorithm for $k$-vs-$k^2$ edit distance: 
Starting from pointer $i_B = 1$, at each iteration it advances $i_B$ to the end of the next longest common subsequence returned by $\method{MaxAlign}_k$. 

\begin{algorithm*}[h]
\caption*{{\bf Algorithm~\ref{alg:jaggedmatch}} $\method{GreedyMatch}(A, B, k)$}
  \begin{algorithmic}
    \STATE $i_B \leftarrow 1$
    \STATE \textbf{for} $e$ \textbf{from} $1$ \textbf{to} $2k+1$
    \STATE \ \ \ $i_B \leftarrow i_B + \max(\MaxAlign_k(A, B, i_B), 1)$
    \STATE \ \ \ \textbf{if} $i_B > n$ \textbf{then} \textbf{return}\ \ SMALL
    \RETURN LARGE
  \end{algorithmic}
\end{algorithm*}

Each increase of the pointer $i_B$ costs at most $2k$ in edit distance (corresponding to the freedom to choose $i_A \in [i_B-k,i_B+k]$). Hence if $i_B$ reaches the end of $B$ in $O(k)$ steps, then $\ED(A,B) \le O(k^2)$ and we can accept; otherwise the edit distance is $>k$ and we can reject.
The above ideas suffice to solve $k$-vs-$k^2$ gap edit-distance in $\tO(k)$ query time after polynomial preprocessing%
\footnote{The prepocessing can be made near-linear, but in this setting our algorithm is still dominated by that of~\cite{CGK16}.}.

Without preprocessing, we can't afford to hash the entire input strings. Instead, we subsample $\approx 1/k$-fraction of the indices from each string and compute hashes for the sampled subsequences.  If the sampled indices perfectly align (with a suitable shift in $[\pm k]$), the hashes of identical contiguous substrings will be identical, whereas the hashes of substrings that are $>k$-far (even in Hamming distance) will be different (w.h.p.). This error is acceptable since we already incur a $\Theta(k)$-error for each call of $\method{MaxAlign}_k$. This algorithm would run in $\tO(n/k)$ time%
\footnote{There is also an additive $\tO(k)$ term like in the preprocessing case, but it is dominated by $\tO(n/k)$ for $k < \sqrt{n}$.}, but there is a caveat: when we construct the hash table, it is not yet possible to pick the indices so that they perfectly align (we don't know the suitable shift). 
Instead, we try $O(\sqrt{k})$ different shifts for each of $A,B$; by birthday paradox, there exists a pair of shifts that exactly adds up to the right shift in $[\pm k]$.
The total run time is given by $\tO(n/k \cdot \sqrt{k}) = \tO(n/\sqrt{k})$. 

\cite{GRS20} also considered the case where we can only preprocess one of the strings. In this case, we can mimic the strategy from the previous paragraph, but take all $O(k)$ shifts on the preprocessed string, saving the $O(\sqrt{k})$-factor at query time. This gives the following result:

  \begin{theorem*}[Informal statement of Theorem~\ref{thm:1-preprocessing}]
  We can distinguish between $\ED(A, B) \le k$ and $\ED(A, B) = \tilde{\Omega}(k^2)$ with high probability in $\tO(n)$ preprocessing time of $A$ and $\tilde{O}(n/k)$ query time.
\end{theorem*}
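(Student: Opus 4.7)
The plan is to reuse the outer \method{GreedyMatch} algorithm (Algorithm~\ref{alg:jaggedmatch}) together with the subsampled-hash implementation of the $\MaxAlign_k$ oracle developed for Theorem~\ref{thm:no-preprocessing}. The only change is how the $\sqrt{k}$ birthday-paradox factor is charged: in the no-preprocessing setting one pays $\sqrt{k}$ on each of $A$ and $B$ at query time, but when one-sided preprocessing of $A$ is available we can instead try \emph{all} $k$ possible phases of the subsample on $A$ during preprocessing and only a single phase on $B$ at query time, eliminating the need for birthday paradox on the $B$-side.

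At preprocessing time, for each phase $s_A\in\{0,1,\ldots,k-1\}$ we subsample $A$ at positions $\{s_A,\, s_A+k,\, s_A+2k,\ldots\}$ and build the Karp--Rabin prefix-hash array of the resulting subsequence $A|_{s_A}$, so that the hash of any range of $A|_{s_A}$ is answerable in $O(1)$. Each phase costs $\tO(n/k)$, and summing over $k$ phases gives $\tO(n)$ preprocessing time. At query time we subsample $B$ at the single phase $\{0,\, k,\, 2k,\ldots\}$, build the analogous prefix-hash array in $\tO(n/k)$ time, and then run \method{GreedyMatch} for $2k+1$ iterations.

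Each call $\MaxAlign_k(A,B,i_B)$ is implemented as follows: for every relative offset $\delta\in[-k,k]$, the positions of $A$ starting at $i_B-\delta$ that align with $B$'s single-phase subsample live in phase $s_A=(i_B-\delta)\bmod k$, which we precomputed; we therefore do a doubling search on the common-prefix length using $O(\log n)$ hash comparisons per shift. Correctness is inherited verbatim from Theorem~\ref{thm:no-preprocessing}: identical substrings produce identical sampled hashes, while substrings at Hamming distance larger than $k$ disagree on at least one of the $\Omega(n/k)$ samples with high probability, so the returned length is within the $O(k)$ slack absorbed by \method{GreedyMatch}.

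The main obstacle is bounding the aggregate cost of the $\MaxAlign_k$ calls by $\tO(n/k)$ rather than $\tO(n/k + k^2)$. I would key the $k$ preprocessed $A$-hash tables into a single dictionary indexed by hash value so that a single length-$\ell$ probe returns every $(s_A,\text{position})$ pair in $A$ whose sampled hash equals $B$'s sampled hash at $i_B$; this eliminates the explicit enumeration over $\delta$ and yields $\tO(1)$ per length probe. The doubling search then costs $\tO(\log n)$ per call, and with $2k+1$ calls the total query cost is $\tO(n/k + k\log n) = \tO(n/k)$ since $k\le\sqrt{n}$.
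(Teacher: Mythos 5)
Your high-level plan (shift $A$ by all $2k+1$ offsets at preprocessing time, keep a single phase on $B$, and make each $\MaxAlign_k$ query a dictionary lookup) is the same strategy the paper uses, but two steps in your write-up do not work as stated.

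First, your subsample is a \emph{deterministic} arithmetic progression of period $k$, and for such a sample the soundness claim ``substrings at Hamming distance larger than $k$ disagree on at least one sample with high probability'' is false: the only randomness left is the hash seed, and an adversarial pair with $\Ham(A[i,j],B[i,j])\ge k$ whose mismatches all avoid the residue class $0 \bmod k$ produces identical sampled subsequences, hence identical hashes with probability $1$. The oracle would then report a long ``match'' that is not an approximate $k$-alignment, breaking the LARGE case. The paper's Claim~\ref{claim:hash-random} needs $S$ to be a \emph{random} set of density $\Theta(\log n/k)$ (to which one may add the multiples of $k$); the preprocessing then hashes the $2k+1$ shifted copies $S+a$ of this one random set.

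Second, the data structure you defer to the last paragraph is the actual crux, and your accounting for it is missing. A ``single dictionary indexed by hash value'' must (a) be filtered by position so that only candidates with $|i_A-i_B|\le k$ count as matches, and (b) contain an entry for every (starting position, probe length, shift) triple your search might query. A doubling/binary search probes lengths that are not powers of two, so supporting it requires $\Omega(n\log n)$ entries per shift, i.e.\ $\tO(nk)$ preprocessing --- far above $\tO(n)$. The paper resolves both issues by bucketing the table as $T_A[\log d,\lfloor i/k\rfloor]$, restricting probes to power-of-two lengths $d$, and inserting a hash only at the $\tO(n/k)$ starting positions per $(d,a)$ where the restricted hash actually changes (i.e.\ $i\in(S+1)\cup(S-d+1)$); this yields $\tO(n)$ insertions total. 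The price is that the oracle is only \emph{half} approximately correct (it returns at least half the true match length), so $\method{GreedyMatch}$ needs $O(\log n)$ iterations per interval and the gap degrades to $\ED(A,B)\le k/(2\log n)$ versus $\ED(A,B)>40k^2$ --- your claim that correctness is ``inherited verbatim'' from Theorem~\ref{thm:no-preprocessing}, with full binary-search precision, is exactly what the $\tO(n)$ preprocessing budget does not allow.
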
 

Our query time improves over a $\tO(n/k+k^2)$-algorithm in~\cite{GRS20} that used similar ideas. (A similar algorithm with low asymmetric query complexity was also introduced in~\cite{GKS19}.)

\subsubsection*{Trading off running time for better approximation}

By combining our algorithm with the $h$-wave algorithm of~\cite{LMS98}, we can tradeoff approximation guarantee and running time in our algorithms. 
The running times we obtain for $k$ vs $k \ell$ edit distance are:
\begin{description}
\item[No preprocessing] $\tO(\frac{n\sqrt{k}+k^{2.5}}{\ell})$ running time for $\ell \in [\sqrt{k}, k]$. (Theorem~\ref{thm:no-preprocessing2})
\item[One-sided preprocessing] $\tO(\frac{nk}{\ell})$ preprocessing time and $\tO(\frac{n+k^2}{\ell})$ query time. (Theorem~\ref{thm:1-preprocessing2})
\item[Two-sided preprocessing] $\tO(\frac{nk}{\ell})$ preprocessing time and $\tO(\frac{k^2}{\ell})$ query time. (Corollary~\ref{thm:2-preprocessing2})
\end{description}

\subsection*{Organization}

Section~\ref{sec:prelim} gives an overview of the randomized hashing technique we use, as well as a structural lemma theorem for close strings. Section~\ref{sec:k-vs-k2} gives a meta-algorithm for distinguishing $k$ versus $k^2$ edit distance. Sections~\ref{subsec:two-sided},\ref{sec:zero-sided},\ref{sec:one-sided} respectively implement this meta-algorithm for two-, zero-, and one-sided preprocessing. %
Appendix~\ref{app:k-vs-k1+eps} explains how to trade off running time for improved gap of $k$ versus $k\ell$ edit distance. Appendix~\ref{app:omit} includes the proof of our structural decomposition lemma.

\section{Preliminaries}\label{sec:prelim}

\subsection{Rabin-Karp Hashing}

A standard preprocessing ingredient is Rabin-Karp-style rolling hashes (e.g., \cite{cormen2009introduction}). We identify the alphabet $\Sigma$ with $1, 2, \hdots, |\Sigma|$. Assume there is also $\$ \not\in \Sigma$, which we index by $|\Sigma|+1$.\footnote{We assume that all indices out of range of $A[1,n]$ are equal to $\$$.}  Assume before any preprocessing that we have picked a prime $p$ with $\Theta(\log n + \log |\Sigma|)$ digits as well a uniformly random value $x \in \{0, 1, \hdots, p-1\}$. We also have $S \subset [n]$, a \emph{subsample} of the indices which allows for sublinear preprocessing of the rolling hashes while still successfully testing string matching (up to a $\tilde{O}(n/|S|)$ Hamming error).

\begin{algorithm}[h]
\caption{$\method{InitRollingHash}(A, S)$}
  \begin{algorithmic}
    \INPUT $A \in \Sigma^n$; $S$ array of indices to be hashed
    \OUTPUT $H,$ a list of $|S|$+1 hashes
  \STATE $H \leftarrow [0]$
  \STATE $c \leftarrow 0$
  \STATE {\bf for} $i \in S$ {\bf then}\\
  \STATE \ \ \ $c \leftarrow cx + A[i]\mod p$
  \STATE \ \ \ append $c$ to $H$.
  \RETURN $H$
  \end{algorithmic}
\label{alg:inithash}
\end{algorithm}

\begin{algorithm}[h]
\caption{$\method{RetrieveRollingHash}(A, S, H, i, j)$}
  \begin{algorithmic}
    \INPUT $A \in \Sigma^n$; $S$ array of hashed indices; $H$ list of hashes; $i \le j$ indices from $1$ to $n$.
    \OUTPUT $h$, hash of string
  \STATE $i' \leftarrow$ least index such that $S[i'] \ge i$.
  \STATE $j' \leftarrow$ greatest index such that $S[j'] \le j$.
  \RETURN $h \leftarrow H[j'] - H[i'-1] x^{j'-i'+1} \mod p$
  \end{algorithmic}
\label{alg:retrievehash}
\end{algorithm}

Observe that $\method{InitRollingHash}$ runs in $\tilde{O}(|S|)$ time and $\method{RetrieveRollingHash}$ runs in $\tilde{O}(1)$ time. %
The correctness guarantees follow from the following standard proposition.

\begin{proposition}\label{prop:hash-standard}
  Let $A, B \in \Sigma^n$ and $S := \{1, 2, \hdots, n\}$. Let $H_A = \method{InitRollingHash}(A,S)$ and $H_B = \method{InitRollingHash}(B,S)$.  The following holds with probability at least $1-\frac{1}{n^4}$ over the choice of $x$. For all $i_A\le j_A$ and $i_B \le j_B$, we have that \[\method{RetrieveRollingHash}(A, S, H_A, i_A, j_A) = \method{RetrieveRollingHash}(A, S, H_B, i_B,j_B)\] if and only if $A[i_A, j_A] = B[i_B, j_B]$.
\end{proposition}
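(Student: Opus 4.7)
The plan is to analyze the two directions of the ``if and only if'' separately and then union bound over interval pairs. For the ``if'' direction (equal substrings produce equal hashes), one simply unfolds the recursive definition of $\method{InitRollingHash}$ and observes that when $S=\{1,\dots,n\}$, the value returned by $\method{RetrieveRollingHash}(A,S,H_A,i,j)$ is exactly $\sum_{k=i}^{j} A[k]\,x^{j-k}\bmod p$; this identity depends only on the string content $A[i,j]$ and the length $j-i+1$, so identical substrings yield identical hashes deterministically (no randomness needed here).

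For the ``only if'' direction, fix any indices $(i_A,j_A,i_B,j_B)$ such that $A[i_A,j_A]\ne B[i_B,j_B]$. The difference of the two retrieved hashes is a polynomial $P(x)\in\mathbb{F}_p[x]$ of degree at most $n$. I will show $P$ is not identically zero in $\mathbb{F}_p[x]$, splitting into two cases. If the two substring lengths agree, then $P(x)=\sum_{k=0}^{L}(A[i_A+k]-B[i_B+k])\,x^{L-k}$ where $L=j_A-i_A$, and since the strings differ in at least one position and $|A[\cdot]-B[\cdot]|\le|\Sigma|<p$, at least one coefficient is a nonzero element of $\mathbb{F}_p$. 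If the lengths differ, the two hash polynomials have different degrees, and the leading coefficient of $P$ is one of $A[i_A]$ or $B[i_B]$, each lying in $\{1,\dots,|\Sigma|+1\}\subset\mathbb{F}_p^\times$, hence nonzero. In either case $P$ is a nonzero polynomial of degree at most $n$.

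Applying the standard ``at most $\deg P$ roots'' fact (equivalently Schwartz--Zippel) over $\mathbb{F}_p$, the probability over the uniform choice of $x\in\{0,1,\dots,p-1\}$ that $P(x)\equiv 0\pmod p$ is at most $n/p$. There are at most $\binom{n+1}{2}^2\le n^4$ tuples $(i_A,j_A,i_B,j_B)$, so a union bound yields total failure probability at most $n^5/p$. Since $p$ has $\Theta(\log n+\log|\Sigma|)$ bits we can take the implicit constant large enough that $p\ge n^{9}$, which gives the required bound $1/n^4$.

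The only mildly subtle step, and the one I would write out carefully, is the degree/leading-coefficient argument showing $P\not\equiv 0$: it is the single place where the hypothesis $p>|\Sigma|+1$ is used, and it handles the differing-length case by a separate leading-coefficient observation rather than a coefficient-matching one. Everything else (the summation form of $\method{RetrieveRollingHash}$, the bound on roots, and the union bound over $O(n^4)$ index tuples) is routine.
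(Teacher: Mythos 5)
Your proof is correct. Note that the paper does not actually prove Proposition~\ref{prop:hash-standard}; it cites it as a standard fact about Rabin--Karp rolling hashes, so there is no in-paper argument to compare against. What you wrote is precisely the standard argument: unfolding $\method{RetrieveRollingHash}$ to the polynomial $\sum_{t=i}^{j}A[t]x^{j-t}\bmod p$, showing the difference polynomial is nonzero (via coefficient matching for equal lengths and the leading coefficient for unequal lengths, using $p>|\Sigma|+1$), bounding the failure probability by $\deg P/p$, and union bounding over the $O(n^4)$ index tuples with $p\ge n^9$. All steps check out, including the observation that the forward direction is deterministic.
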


This claim is sufficient for our warm-up two-sided preprocessing algorithm. However, for the other algorithms, we need to have $|S| = o(n)$  for our hashing to be sublinear. This is captured by another claim.

\begin{claim}\label{claim:hash-random}
  Let $A, B \in \Sigma^n$ and $S \subseteq \{1, 2, \hdots, n\}$ be a random subset with each element included independently with probability at least $\alpha := \min(\tfrac{4\ln n}{k}, 1)$. Let $H_A = \method{InitRollingHash}(A,S)$ and $H_B = \method{InitRollingHash}(B,S)$. For any $i \le j$ in $\{1, \hdots, n\}$ we have 
  \begin{itemize}
  \item[(1)] If $A[i, j] = B[i, j]$ then $\method{RetrieveRollingHash}(A, S, H_A, i, j) = \method{RetrieveRollingHash}(B, S, H_B, i, j)$.
  \item[(2)] If $\Ham(A[i, j], B[i, j]) \ge k$ then with probability at least $1-\frac{1}{n^3}$ over the choice of $x$ and $S$, $\method{RetrieveRollingHash}(A, S, H_A, i, j) \neq \method{RetrieveRollingHash}(B, S, H_B, i, j)$ 
  \end{itemize}
\end{claim}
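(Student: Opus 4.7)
My plan is to handle the two parts separately. Part (1) is essentially deterministic: unrolling the recurrence in $\method{InitRollingHash}$ shows that $\method{RetrieveRollingHash}(A, S, H_A, i, j)$ equals $\sum_{l=i'}^{j'} A[S[l]]\, x^{j'-l} \bmod p$, which depends on $A$ only through its restriction to $S\cap[i,j]$. The assumption $A[i,j]=B[i,j]$ then forces the two hashes to coincide identically, with no randomness needed.

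For part (2), I would union-bound two failure events. Set $M := \{l \in [i,j] : A[l] \neq B[l]\}$; by hypothesis $|M| \ge k$. The first bad event $E_1$ is that the subsample misses every mismatch, i.e.\ $S \cap M = \emptyset$. Since indices are sampled independently with probability at least $\alpha = \min(4\ln n / k, 1)$, we get $\Pr[E_1] \le (1-\alpha)^{|M|} \le (1-\alpha)^k \le e^{-\alpha k} \le 1/n^4$ (the case $\alpha=1$ being trivial).

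The second bad event $E_2$ is that, conditional on $\neg E_1$, the hash values still collide over the random $x$. Using the closed form above, the difference of the two retrieved hashes is a polynomial $f(x) \bmod p$ of degree at most $|S|-1 < n$, with coefficient $A[l]-B[l]$ at each $l \in S\cap[i,j]$. Because $p$ has $\Theta(\log n + \log|\Sigma|)$ digits and thus exceeds $|\Sigma|+1$, any coefficient corresponding to $l \in S\cap M$ is nonzero mod $p$; so on $\neg E_1$, $f$ is a nonzero element of $(\mathbb{Z}/p\mathbb{Z})[x]$. The standard degree-versus-roots bound then yields $\Pr_x[f(x) \equiv 0] \le n/p \le 1/n^4$ once $p$ is chosen with sufficiently many digits. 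A union bound delivers $\Pr[\text{hashes coincide}] \le \Pr[E_1] + \Pr[E_2 \wedge \neg E_1] \le 2/n^4 \le 1/n^3$.

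There is essentially no serious obstacle: the only items to check are that the randomness in $S$ (controlling $E_1$) decouples cleanly from the randomness in $x$ (controlling $E_2$), which it does by the tower property, and that the difference polynomial really is nonzero in $(\mathbb{Z}/p\mathbb{Z})[x]$, which comes down to $p > |\Sigma|+1$ so that per-character differences do not vanish modulo $p$.
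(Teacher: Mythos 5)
Your proposal is correct and follows essentially the same route as the paper: show the subsample hits at least one mismatch except with probability $(1-\alpha)^k \le 1/n^4$, then bound the hash-collision probability, and union-bound the two failure events. The only difference is cosmetic: the paper delegates the collision step to Proposition~\ref{prop:hash-standard} applied to the subsampled sequences $A_S, B_S$, whereas you open that black box and redo the degree-versus-roots argument for the difference polynomial directly (and you also make explicit that part (1) is deterministic, which the paper's citation of Proposition~\ref{prop:hash-standard} glosses over).
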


\begin{proof}
  Let $A_S$ and $B_S$ be the subsequences of $A$ and $B$ corresponding to the indices $S$. Note that if $A[i, j] = B[i, j]$ then $A_S[i', j'] = B_S[i', j']$, where $i'$ and $j'$ are chosen as in $\method{RetrieveRollingHash}$. Property (1) then follows by Proposition~\ref{prop:hash-standard}.

  If $\Ham(A[i, j], B[i, j]) \ge k$, the probability there exists $i_0 \in S \cap [i, j]$ such that $A[i_0] \neq B[i_0]$ and thus $A_S[i', j'] \neq B_S[i', j']$ is $1$ if $\alpha = 1$ and otherwise at least 
  \[1 - (1 - (4\ln n)/k)^{k} \ge 1 - 1/e^{4\ln n} = 1-1/n^4.\] 
  If $A_S[i', j'] \neq B_S[i', j']$ then by Proposition~\ref{prop:hash-standard}, \[\method{RetrieveRollingHash}(A, S, H_A, i, j) \neq \method{RetrieveRollingHash}(B, S, H_B, i, j)\] with probability at least $1 - 1/n^4$. Therefore, for a random $S$, $\method{RetrieveRollingHash}(A, S, H_A, i, j) \neq \method{RetrieveRollingHash}(B, S, H_B, i, j)$ is at least $1 - 1/n^4 - 1/n^4 > 1 - 1/n^3$. Thus, property (2) follows.
\end{proof}

\subsection{Structural Decomposition Lemma}

\begin{definition}[$k$-alignment and approximate $k$-alignment] \hfill

Given strings $A,B$, we say that a substring $B[i_B,i_B+d-1]$ with $1 \le i_B, i_B + d-1 \le n$ is in {\em $k$-alignment} in $A[i_A, i_A+d-1]$ if $|i_A-i_B|\le k$ and $A[i_A, i_A+d-1] = B[i_B,i_B+d-1]$. If instead we have $|i_A - i_B| \le 3k$ and $\ED(A[i_A, i_A+d-1], B[i_B, i_B+d-1]) \le 3k$, we say that $B[i_B, i_B+d-1]$ is in {\em approximate $k$-alignment} with $A[i_A, i_A+d-1]$.  We say that $B[i_B,i_B+d-1]$ has a (approximate) $k$-alignment in $A$ if there is an $i_A$ with $|i_A-i_B|\le k$ such that $B[i_B,i_B+d-1]$ is in (approximate) $k$-alignment with $A[i_A,i_A+d-1]$.
\end{definition}

For all our algorithms we need the following decomposition lemma. The proof is deferred to Appendix~\ref{app:omit}.

\begin{lemma}\label{lem:decomp}
  Let $A, B \in \Sigma^{*}$ be strings such that $\ED(A,B) \le k$. Then, $A$ and $B$ can be partitioned into at $2k+1$ intervals $I_1^A, \hdots, I_{2k+1}^A$; $I_1^B, \hdots, I_{2k+1}^B$, respectively, and a partial monotone matching $\pi : [2k+1] \to [2k+1] \cup \{\perp\}$ such that
  \begin{itemize}
  \item Unmatched intervals are of length at most $1$, and
  \item For all $i$ in the matching, $B[I_{\pi(i)}^B]$ is in $k$-alignment with $A[I_i^A]$.%
  \end{itemize}
\end{lemma}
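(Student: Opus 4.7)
The plan is to read the partition directly off an optimal edit sequence. Fix a minimum-length sequence $\sigma_1, \ldots, \sigma_m$ of single-character operations (substitutions, insertions, deletions) transforming $A$ into $B$, so $m = \ED(A,B) \le k$. These operations carve the alignment into matched segments $S_0, S_1, \ldots, S_m$ (possibly empty), where $S_0$ precedes $\sigma_1$, $S_m$ follows $\sigma_m$, and on each $S_j$ the corresponding substrings of $A$ and $B$ agree character-by-character.

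I would cut $A$ into the alternating sequence of intervals given by the $A$-sides of $S_0, \sigma_1, S_1, \sigma_2, \ldots, \sigma_m, S_m$, and cut $B$ analogously, padding with empty intervals at the end if $m < k$ so that each string is partitioned into exactly $2k+1$ intervals. The operation intervals have length at most $1$: substitutions contribute one character on each side, insertions contribute one character in $B$ and none in $A$, and deletions vice versa. Define $\pi$ to fix each segment-interval index (mapping the $A$-interval of $S_j$ to the $B$-interval of $S_j$) and to send every operation-interval or padding index to $\perp$. Monotonicity is immediate, and all unmatched intervals have length at most $1$ by construction.

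The nontrivial step is verifying that each matched segment-interval is in $k$-alignment: the starting positions $a$ in $A$ and $b$ in $B$ of $S_j$ must satisfy $|a - b| \le k$. Note that $a$ equals the total length of the preceding matched segments in $A$ plus the $A$-contributions of $\sigma_1, \ldots, \sigma_j$ (i.e., the number of prior substitutions and deletions), and similarly $b$ equals the same matched-segment total plus the number of prior substitutions and insertions. Subtracting cancels the matched contribution and gives $|a - b| = |(\text{prior deletions}) - (\text{prior insertions})|$, each of which is at most $m \le k$. The main obstacle is just careful bookkeeping around empty segments and the asymmetric contributions of insertions versus deletions, but once the interleaved partition is written down explicitly, both the length bound on unmatched intervals and the shift bound on matched intervals fall out directly from this counting.
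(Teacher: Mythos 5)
Your proposal is correct, and it takes a different (and arguably cleaner) route than the paper. The paper proves the lemma by induction on $k$: it peels off a single edit operation, applies the inductive hypothesis to the intermediate string $B'$, and then splits the one affected interval into three pieces while re-shifting all subsequent intervals, with a three-way case analysis for substitution, deletion, and insertion. You instead construct the decomposition in one shot from an optimal alignment: the $m \le k$ operation columns separate the alignment into at most $m+1$ maximal runs of matched columns, giving $2m+1 \le 2k+1$ interleaved intervals on each side, and the shift of each matched run is exactly $(\text{prior deletions}) - (\text{prior insertions})$, hence at most $m \le k$ in absolute value. Your argument trades the paper's inductive bookkeeping for the (standard, but worth stating explicitly) fact that an optimal edit script can be realized as a monotone alignment of $A$ and $B$; granting that, the direct counting argument for the shift bound is tighter and more transparent than the paper's, which only tracks that each inductive step increases the allowable shift by at most one. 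Both yield the same decomposition; yours also makes it evident that the bound $|i_A - i_B| \le k$ could be sharpened to $\max(D, I) \le \ED(A,B)$, though the lemma does not need this.
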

 
\section{A meta-algorithm for distinguishing $k$ vs. $k^ 2$}\label{sec:k-vs-k2}

In this section, we present $\method{GreedyMatch}$ (Algorithm~\ref{alg:jaggedmatch}), a simple algorithm for distinguishing $\ED(A, B) \le O(k)$ from $\ED(A, B) \ge \Omega(k^2)$. The algorithm assumes access to data structure $\MaxAlign_k$ as defined below. In the following sections, we will present different implementations of this data structure for the case of two-sided, one-sided, and no preprocessing. 

Define $\MaxAlign_k(A, B, i_B)$ to be a function which returns $d \in [1, n]$. We say that an implementation of $\MaxAlign_k(A, B, i_B)$ is \emph{correct} if with probability $1$ it outputs the maximum $d$ such that $B[i_B, i_B+d-1]$ has a $k$-alignment in $A$, and if no $k$-alignment exists, it outputs $d = 0$. 
We say that an implementation is \emph{approximately correct} if the following are true.
\begin{enumerate}
\item Let $d$ be the maximal such that $B[i_B, i_B+d-1]$ has a $k$-alignment in $A$. With probability $1$, $\MaxAlign_k(A, B, i_B) \ge d$.
\item With probability at least $1-1/n^2$, $B[i_B, i_B+ \MaxAlign_k(A, B, i_B)-1]$ has an approximate $k$-alignment in $A$.
\end{enumerate}

We say that an implementation is \emph{half approximately correct} if the following are true.
\begin{enumerate}
\item Let $d$ be the maximal such that $B[i_B, i_B+d-1]$ has a $k$-alignment. With probability $1$, $\MaxAlign_k(A, B, i_B) > d/2$ (unless $d=0$).
\item With probability at least $1-1/n^2$, $B[i_B, i_B+ \MaxAlign_k(A, B, i_B)-1]$ has an approximate $k$-alignment in $A$.
\end{enumerate}

\begin{algorithm}[h]
\caption{$\method{GreedyMatch}(A, B, k)$}
  \begin{algorithmic}
    \INPUT $A, B \in \Sigma^n$, $k \le n$
    \OUTPUT SMALL if $\ED(A, B) \le k$ or LARGE if $\ED(A, B) > 40k^2$
    \STATE $i_B \leftarrow 1$
    \STATE \textbf{for} $e$ \textbf{from} $1$ \textbf{to} $2k+1$
    \STATE \ \ \ $i_B \leftarrow i_B + \max(\MaxAlign_k(A, B, i_B), 1)$
    \STATE \ \ \ \textbf{if} $i_B > n$
    \STATE \ \ \ \ \ \ \textbf{return}\ \ SMALL
    \RETURN LARGE
  \end{algorithmic}
\label{alg:jaggedmatch}
\end{algorithm}

We now give the following correctness guarantee.

\begin{lemma}\label{lem:jagged-match}
If $\MaxAlign_k$ is approximately correct and $\ED(A, B) \le k$, then with probability $1$, $\method{GreedyMatch}(A, B, k)$ returns SMALL. If $\MaxAlign_k$ is half approximately correct and $\ED(A, B) \le k/(2\log n)$, then with probability $1$, $\method{GreedyMatch}(A, B, k)$ returns SMALL. If $\MaxAlign_k$ is (half) approximately correct and $\ED(A, B) > 40k^2$, then with probability $1 - \frac{1}{n}$, $\method{GreedyMatch}(A, B, k)$ returns LARGE. Further, $\method{GreedyMatch}(A, B, k)$ makes $O(k)$ calls to $\MaxAlign_k$ and otherwise runs in $O(k\log n)$ time.
\end{lemma}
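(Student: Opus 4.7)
The plan is to prove the four claims --- a deterministic runtime bound and three probabilistic correctness/soundness guarantees --- by tracking the evolution of the pointer $i_B$ across the $\le 2k+1$ iterations of the outer loop. The runtime claim is immediate: each iteration uses one call to $\MaxAlign_k$ and $O(\log n)$ arithmetic on $O(\log n)$-bit indices. The three probabilistic parts combine the structural decomposition of Lemma~\ref{lem:decomp} (for the two SMALL-return claims) with the per-call approximate-alignment guarantee of property~(2) (for the LARGE-return claim).

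For the first completeness claim, I invoke Lemma~\ref{lem:decomp} at parameter $k$ to obtain a partition of $B$ into $\le 2k+1$ intervals $I_1^B,\dots,I_{2k+1}^B$ matched to corresponding intervals of $A$. I maintain the invariant that at the end of iteration $e$, either $i_B > n$ or $i_B$ lies at the start of some $I_j^B$ with $j\ge e+1$. The inductive step: if $i_B$ begins iteration $e$ at offset $s$ inside a matched interval $I_j^B$, then the $k$-alignment of $I_j^B$ to $A[I_{\pi(j)}^A]$ restricted to offsets $\ge s$ gives a $k$-alignment of length $|I_j^B|-s$ starting at $i_B$; under property~(1) of approximate correctness this forces $\MaxAlign_k \ge |I_j^B|-s$, and the pointer clears the entire remainder of the interval in one step. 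Unmatched intervals have length $\le 1$ and are cleared by the $\max(\cdot,1)$ safety net. After $2k+1$ iterations the invariant forces $i_B > n$, so SMALL is returned with probability $1$. The half-approximate variant follows the same plan with the tighter decomposition parameter $k':=k/(2\log n)$, giving $\le 2k'+1$ intervals. The only change is that property~(1) now gives only $\MaxAlign_k > (|I_j^B|-s)/2$, so an iteration need no longer clear the whole remainder of the current interval; it does, however, strictly halve the remaining length $r:=|I_j^B|-s$. Hence each matched interval is cleared in $\le \log_2 n + O(1)$ iterations, so the total iteration count is at most $(2k'+1)(\log_2 n + O(1)) \le 2k+1$ --- which is precisely the calibration behind the factor $1/(2\log n)$ in the hypothesis.

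For the soundness claim, suppose $\method{GreedyMatch}$ returns SMALL after $t \le 2k+1$ iterations, inducing a partition of $B$ into pieces $B_e = B[i_B^{(e-1)}, i_B^{(e)}-1]$ of lengths $d_e$. A union bound over the $t$ calls (each failing with probability $\le 1/n^2$) gives, with probability $\ge 1-1/n$, that every $B_e$ admits an approximate $k$-alignment with some $A[i_A^{(e)}, i_A^{(e)}+d_e-1]$ satisfying $|i_A^{(e)} - i_B^{(e-1)}|\le 3k$ and internal edit distance $\le 3k$ (pieces with $\MaxAlign_k=0$ have length $d_e=1$ and are handled trivially by a single insertion). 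I then assemble these per-piece alignments into a single monotone alignment of $A$ and $B$: the per-piece costs sum to $\le 3kt$, and between consecutive pieces the discrepancy $|i_A^{(e)} - (i_A^{(e-1)} + d_{e-1})| \le 6k$ (since both endpoints lie within $3k$ of the common $B$-coordinate $i_B^{(e-1)}$) is absorbed either by deleting the gap in $A$ or, when the pieces overlap, by trimming the tail of piece $e-1$'s alignment and covering the displaced $B$-characters by insertions, at cost $\le 6k$ per boundary. Summing gives $\ED(A,B) \le 3kt + 6k(t-1) + O(k) < 40k^2$; contrapositively, $\ED(A,B) > 40k^2$ forces the algorithm to return LARGE.

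The main obstacle is precisely this reconciliation of possibly overlapping $A$-anchors into a monotone alignment: showing that the overlap-trimming cost is bounded by the $O(k)$ shift discrepancy rather than by the full piece length $d_e$ is where the only non-trivial accounting of the proof lives. Everything else reduces to a clean induction on interval index (for the two completeness claims) or to direct inspection (for the runtime bound).
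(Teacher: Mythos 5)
Your proof is correct and follows essentially the same route as the paper's: Lemma~\ref{lem:decomp} plus an interval-clearing induction for the two SMALL claims (with the halving/$\log n$ argument calibrating the $k/(2\log n)$ factor), and a union bound over the $O(k)$ calls followed by stitching the per-piece approximate alignments (cost $\le 3k$ each, plus $\le 6k$ per boundary for shift discrepancies) into a global alignment of cost $< 40k^2$ for the LARGE claim. If anything, your accounting of the overlapping $A$-anchors in the soundness step is more explicit than the paper's.
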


\begin{proof}
  If $\MaxAlign_k$ is approximately correct and if $\ED(A, B) \le k$ then by Lemma~\ref{lem:decomp}, $B$ can be decomposed into $2k+1$ intervals such that they are each of length at most $1$ or they exactly match the corresponding interval $A$, up to a shift of $k$. In the algorithm, if $i_B$ is in one of these intervals, then $\MaxAlign_k$ finds the rest of the interval (and perhaps more). Then, the algorithm will reach the end of $B$ in $2k+1$ steps and output SMALL.

  Let $k' = k/(2\log n)$.  If $\MaxAlign_k$ is half approximately correct and $\ED(A, B) \le k'$ then by Lemma~\ref{lem:decomp}, $B$ can be decomposed into $2k'+1$ intervals such that they are each of length at most $1$ or they exactly match the corresponding interval $A$, up to a shift of $k$. In the algorithm, if $i_B$ is in one of these intervals, then $\MaxAlign_k$ finds more than half of the interval. Thus, it takes at most $\log n$ steps for the algorithm to get past each of the $2k'+1$ intervals. Thus, the algorithm will reach the end of $B$ in $(2k'+1)(\log n) < 2k+1$ steps and output SMALL.
  
  For the other direction, it suffices to prove that if the algorithm outputs SMALL then $\ED(A, B) \le 40k^2$. If $\MaxAlign_k$ is (half) approximately correct, and the algorithm outputs SMALL, with probability at least $1-1/n$ over all calls to $\MaxAlign_k$, there exists a decomposition of $B$ into $2k+1$ intervals such that each is either of length $1$ or has an approximate $k$-alignment in $A$. Thus, there exists a sequence of edit operations from $B$ to $A$ by
  \begin{enumerate}
  \item deleting the at most $2k+1$ characters of $B$ which do not match,
  \item modifying at most $3k$ characters within each interval of $B$, and
  \item adding/deleting $6k$ characters between each consecutive pair of exactly-matching intervals (and before the first and after the last interval), since each match had a shift of up to $3k$.
  \end{enumerate}
  This is a total of $2k+1 + 3k(2k+1) +6k(2k+2) \le 40k^2$ operations. Thus, if $\ED(A, B) > 40k^2$, $\method{GreedyMatch}(A, B, k)$ return LARGE with probability at least $1 - \frac{1}{n}$. The runtime analysis follows by inspection.
\end{proof}

By Lemma~\ref{lem:jagged-match}, it suffices to implement $\MaxAlign_k$ efficiently and with $1/\poly(n)$ error probability in various models.%

\section{Warm-up: two-sided Preprocessing} \label{subsec:two-sided}

As warm-up, we give an implementation of $\MaxAlign_k$ that first preprocesses $A$ and $B$ (separately) for $\poly(n)$ time%
\footnote{It is not hard to improve the preprocessing time to $\tO(n)$. We omit the details since this algorithm would still not be optimal for the two-sided preprocessing setting.}, and then implement $\MaxAlign_k$ queries in $O(\log(n))$ time. %

Algorithm~\ref{alg:twosidedpreproc} takes as input a string $A$ and produces $(H_A, T_A)$, the rolling hashes of $A$ and a collection of hash tables. We let $H_B, T_B$ denote the corresponding preprocessing output for $B$. Algorithm~\ref{alg:align-twosidedpreproc} gives a correct implementation of $\MaxAlign_k$ with the assistance of this preprocessing.

\begin{algorithm}[h]
\caption{$\method{TwoSidedPreprocessing}_k(A)$}
  \begin{algorithmic}
    \INPUT $A \in \Sigma^n$, $k \le n$
    \OUTPUT $(H_A, T_A)$, a collection of hashes
    \STATE $H_A \leftarrow \method{InitRollingHash}(A, [1,n])$
    \STATE $\mathcal T_A \leftarrow n \times n$ matrix of hash tables
    \STATE \textbf{for} $i$ \textbf{from} $1$ \textbf{to} $n$
    \STATE \ \ \ \textbf{for} $j$ \textbf{from} $i$ \textbf{to} $n$
    \STATE \ \ \ \ \ \ \textbf{for} $a$ \textbf{from} $-k$ \textbf{to} $k$
    \STATE \ \ \ \ \ \ \ \ \ \textbf{if} $[i+a, j+a] \subset [1, n]$, add $\method{RetrieveRollingHash}(A, [1,n], H_A, i+a, j+a)$ to $T[i,j]$
    \RETURN $(H_A, T_A)$
  \end{algorithmic}
\label{alg:twosidedpreproc}
\end{algorithm}

\begin{algorithm}[h]
\caption{$\method{TwoSidedMaxAlign}_k(A, B, i_B)$}
  \begin{algorithmic}
    \INPUT $A \in \Sigma^n, B \in \Sigma^n$, $k \le n$, $i_B \in [1, n]$
    \OUTPUT $d \in [0, n]$.
    \STATE Binary search to find maximal $d \in [0, n-i_B+1]$ such that\\
    \ \ \ \ \ \ $\method{RetrieveRollingHash}(B, [1,n], H_B, i_B, i_B+d-1) \in T_A[i_B, i_B+d-1]$
    \RETURN $d$
  \end{algorithmic}
\label{alg:align-twosidedpreproc}
\end{algorithm}

\begin{lemma}\label{lem:two-sided-align}
 $\method{TwoSidedMaxAlign}_k$ is a correct implementation of $\MaxAlign_k$ . 
\end{lemma}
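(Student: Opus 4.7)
The plan is to reduce correctness of $\method{TwoSidedMaxAlign}_k$ to two ingredients: (i) the predicate ``$B[i_B, i_B+d-1]$ has a $k$-alignment in $A$'' is monotone in $d$, so the binary search on $d$ is well-posed; and (ii) under the no-collision event of Proposition~\ref{prop:hash-standard}, the hash-membership test $\method{RetrieveRollingHash}(B, [1,n], H_B, i_B, i_B+d-1) \in T_A[i_B, i_B+d-1]$ is precisely the existence of a valid $k$-alignment. The definition of a ``correct'' implementation uses ``probability $1$''; I interpret this after conditioning on the one-time choice of the hash function $x$ being collision-free (which holds with probability $\ge 1 - 1/n^4$ by Proposition~\ref{prop:hash-standard}), whereupon the algorithm becomes deterministic.

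For monotonicity, if $B[i_B, i_B+d-1] = A[i_A, i_A+d-1]$ with $|i_A - i_B| \le k$, then for every $d' \le d$ the same shift yields $B[i_B, i_B+d'-1] = A[i_A, i_A+d'-1]$, so a $k$-alignment persists at length $d'$. Hence the set of admissible $d$ is a prefix of $\{0, 1, \ldots, n - i_B + 1\}$, and binary search returns its maximum $d^\star$. For the hash step, $\method{TwoSidedPreprocessing}_k$ inserts into $T_A[i, j]$ exactly the hashes $\method{RetrieveRollingHash}(A, [1,n], H_A, i+a, j+a)$ for each $a \in [-k, k]$ with $[i+a, j+a] \subseteq [1, n]$. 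Conditional on collision-freeness, membership of the $B$-hash in $T_A[i_B, i_B+d-1]$ is equivalent to $B[i_B, i_B+d-1] = A[i_B+a, i_B+d-1+a]$ for some admissible shift $a$, i.e., to the existence of a $k$-alignment of length $d$. Combining (i) and (ii), the binary search returns $d^\star$.

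The only subtle point, and the main thing I would verify carefully, is the semantic reading of ``probability $1$'' against a randomized hash: folding the $1/n^4$ failure into a single preprocessing event is the right move and is compatible with the downstream union bound in Lemma~\ref{lem:jagged-match}. The rest is routine bookkeeping---checking that every in-range shift is enumerated by the conditional inclusion in the preprocessing loop, and that the binary search handles $d = 0$ vacuously (the empty substring is trivially $k$-aligned, so $d^\star \ge 0$ always, which matches the specification's $d = 0$ output when no nontrivial alignment exists).
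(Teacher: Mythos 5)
Your proof is correct and follows essentially the same route as the paper's: conditional on the Rabin--Karp hashes being collision-free (Proposition~\ref{prop:hash-standard} plus a union bound over shifts), the table-membership test exactly characterizes the existence of a $k$-alignment of length $d$, and the binary search then returns the maximum such $d$. You are somewhat more careful than the paper on two points it glosses over --- the monotonicity of the alignment predicate in $d$ (needed for the binary search to be well-posed) and the reconciliation of the ``probability $1$'' clause in the definition of correctness with the $1/\poly(n)$ hash-failure event --- both of which are handled exactly as you suggest.
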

\begin{proof}
  Observe that $\method{TwoSidedMaxAlign}$ is correct if for all $a \in [-k, k]$, $\method{RetrieveRollingHash}(A, [1,n], H_A, i_B+a, i_B+d+a) = \method{RetrieveRollingHash}(B, [1,n], H_B, i_B, i_B+d)$ if and only if $A[i_B+a, i_B+d+a] = B[i_B,i_B+d]$.  By Claim~\ref{prop:hash-standard} and the union bound, this happens with probability at least $1 - \frac{1}{n^3}$.
  \end{proof}

\begin{theorem}\label{thm:zero-sided}
  When both $A$ and $B$ are preprocessed for $\poly(n)$ time, we can distinguish between $\ED(A, B) \le k$ and $\ED(A, B) > 40k^2$ in $\tilde{O}(k)$ time with probability $1 - \frac{1}{n}$.
\end{theorem}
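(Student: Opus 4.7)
The plan is to simply combine the ingredients already set up in the excerpt: use $\method{TwoSidedPreprocessing}_k$ to build $(H_A, T_A)$ and $(H_B, T_B)$, then run $\method{GreedyMatch}(A, B, k)$ with the $\MaxAlign_k$ oracle instantiated by $\method{TwoSidedMaxAlign}_k$. Correctness is then inherited from Lemma~\ref{lem:two-sided-align} (which asserts that $\method{TwoSidedMaxAlign}_k$ is a correct, not merely approximately correct, implementation of $\MaxAlign_k$) together with Lemma~\ref{lem:jagged-match}: the former guarantees that $\method{GreedyMatch}$ returns SMALL whenever $\ED(A, B) \le k$ and returns LARGE with probability at least $1 - 1/n$ whenever $\ED(A, B) > 40k^2$.

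For the running time, I would account for the two phases separately. Preprocessing runs in $\poly(n)$ as promised by the statement (the triple loop in Algorithm~\ref{alg:twosidedpreproc} clearly fits, and the paper's footnote notes it can be sharpened). For the query phase, Lemma~\ref{lem:jagged-match} says $\method{GreedyMatch}$ makes $O(k)$ calls to $\MaxAlign_k$ plus $O(k \log n)$ additional work. Each $\method{TwoSidedMaxAlign}_k$ call performs a binary search over $d \in [0, n - i_B + 1]$, and at each step does a single $\method{RetrieveRollingHash}$ lookup (which is $\tilde{O}(1)$) followed by a hash-table membership query in $T_A[i_B, i_B+d-1]$ (also $\tilde{O}(1)$). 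So each query costs $\tilde{O}(\log n) = \tilde{O}(1)$, giving total query time $\tilde{O}(k)$.

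The one small subtlety to be explicit about is the union bound on failure probabilities. Both the Rabin–Karp guarantee of Proposition~\ref{prop:hash-standard} (used inside $\method{TwoSidedMaxAlign}_k$) and the failure event in Lemma~\ref{lem:jagged-match} are individually $1/\poly(n)$, so taking a union bound over the $O(k) = O(n)$ queries still leaves an overall failure probability of at most $1/n$, as claimed.

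I don't anticipate a real obstacle here; the statement is essentially a packaging of Lemma~\ref{lem:two-sided-align} and Lemma~\ref{lem:jagged-match}. The only thing that needs a tiny bit of care is checking that each $\method{TwoSidedMaxAlign}_k$ call truly runs in $\tilde{O}(1)$ time (binary search depth $O(\log n)$, each step $\tilde{O}(1)$), and confirming that the union bound on the hashing-error event across all $O(k)$ queries still fits inside the $1 - 1/n$ probability budget of the theorem.
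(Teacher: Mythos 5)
Your proposal is correct and follows essentially the same route as the paper's proof: instantiate $\MaxAlign_k$ with $\method{TwoSidedMaxAlign}_k$, invoke Lemma~\ref{lem:two-sided-align} (correct implies approximately correct) together with Lemma~\ref{lem:jagged-match} for correctness, and observe that each query is a $\tilde{O}(1)$ binary search so the total query time is $\tilde{O}(k)$. Your added remark about the union bound over the $O(k)$ hashing events is a reasonable explicit version of what the paper leaves implicit.
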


\begin{remark}
Note that~\cite{CGK16}'s algorithm obtains similar guarantees while only spending $O(\log(n))$ query time.  Further, sketching algorithms for edit distance often achieve much better approximation factors, but the preprocessing is often not near-linear (e.g., \cite{BZ16}).\footnote{Document exchange (e.g., \cite{BZ16,H19}) is similar to the one-sided preprocessing model, but $A$ and $B$ are never brought together (rather a hash of $A$ is sent to $B$).}
\end{remark}

\begin{proof}[Proof of Theorem~\ref{thm:zero-sided}]
By Lemma~\ref{lem:two-sided-align}, $\method{TwoSidedMaxAlign}$ is correct (and thus approximately correct) so by Lemma~\ref{lem:jagged-match} succeeds with high enough probability that $\method{GreedyMatch}$ outputs the correct answer with probability at least $1 - \frac{1}{n}$.   
  
  By inspection, the preprocessing runs in $\poly(n)$ time. Further, as the binary search, hash computation, and table lookup are all $\tilde{O}(1)$ operations, $\method{TwoSidedMaxAlign}$ runs in $\tilde{O}(1)$ time, so the two-sided preprocessing version of $\method{GreedyMatch}$ runs in $\tilde{O}(k)$ time.
\end{proof}

\section{Main Result: $k$ vs $k^2$ with No Preprocessing} \label{sec:zero-sided}

As explained in the introduction, for the no preprocessing case, we take advantage of the fact that any $c \in [-k, k]$ can be written as $a\sqrt{k} + b$, there $a, b \in [-\sqrt{k}, \sqrt{k}]$.\footnote{We have $\sqrt{k}$ as shorthand for $\lceil \sqrt{k}\rceil$.} Thus, if for $A$ we compute a rolling hash tables according to $S + a\sqrt{k} := \{s + a\sqrt{k}, s \in S\} \cap [1, n]$ for $a \in \sqrt{k}$. Likewise, for $B$ we compute rolling hash tables according to $S - b := \{s - b, s \in S\} \cap [1, n]$. Then, if we seek to compare $A[i_B+c, i_B+c+d-1]$ and $B[i_B, i_B+d-1]$, it essentially suffices to compare\footnote{We need to ``shave'' $k$ from each end of the substrings as we need to ensure that $[i_B-b+k, i_B+d-1-b-k] \subset [i_B, i_B+d-1],$ etc.} $A[i_B+a\sqrt{k}+k, i_B+a\sqrt{k}+d-1-k]$ and $B[i_B-b+k, i_B+d-1-b-k]$. 

Before calling $\method{GreedyMatch}$, we call two methods $\method{ProcessA}$ and $\method{ProcessB}$ which compute these hash tables. Note that the procedures are asymmetrical. These take $\tilde{O}(n/\sqrt{k})$ time each.

\begin{algorithm}[h]
\caption{$\method{ProcessA}_k(A)$}
\begin{algorithmic}
  \INPUT $A \in \Sigma^n$
    \STATE \textbf{for} $a$ \textbf{from} $-\sqrt{k}$ \textbf{to} $\sqrt{k}$
    \STATE \ \ \ $H_{A,a\sqrt{k}} \leftarrow \method{InitRollingHash}(A, S+a\sqrt{k})$ 
    \RETURN $\{H_{A,a\sqrt{k}} : a \in [-\sqrt{k}, \sqrt{k}]\}$
  \end{algorithmic}
\label{alg:processA}
\end{algorithm}
\begin{algorithm}[h]
  
\caption{$\method{ProcessB}_k(B)$}
  \begin{algorithmic}
    \STATE \textbf{for} $b$ \textbf{from} $-\sqrt{k}$ \textbf{to} $\sqrt{k}$
    \STATE \ \ \ \ \ \ $H_{B,b} \leftarrow \method{InitRollingHash}(B, S-b)$
    \RETURN $\{H_{B,b} : b \in [-\sqrt{k}, \sqrt{k}]\}$
  \end{algorithmic}
\end{algorithm}

\begin{algorithm}[h!]
\caption{$\method{MaxAlign}_k(A, B, i_B)$}
  \begin{algorithmic}
    \INPUT $A \in \Sigma^n, B \in \Sigma^n$, $k \le n$, $i_B \in [1, n]$
    \STATE $d_{0} \leftarrow 2k$, $d_1 \leftarrow n-i_B+1$
    \STATE \textbf{while} $d_0 \neq d_1$ \textbf{do}
    \STATE \ \ \ $d_{\text{mid}} \leftarrow \lceil(d_0 + d_1) / 2\rceil$
    \STATE \ \ \ \textbf{if} $d \le 2k$ \textbf{then return} \True{}
    \STATE \ \ \ $L_A, L_B \leftarrow 0$
    \STATE \ \ \ \textbf{for} $a$ \textbf{from} $-\sqrt{k}$ \textbf{to} $\sqrt{k}$
    \STATE \ \ \ \ \ \ $h \leftarrow \method{RetrieveRollingHash}(A, S+a\sqrt{k}, H_{A,a\sqrt{k}}, i_B+k+a\sqrt{k}, i_B+d_{\text{mid}}-k-1+a\sqrt{k})$
    \STATE \ \ \ \ \ \ append $h$ to $L_A$
    \STATE \ \ \ \textbf{for} $b$ \textbf{from} $-\sqrt{k}$ \textbf{to} $\sqrt{k}$
    \STATE \ \ \ \ \ \ $h \leftarrow \method{RetrieveRollingHash}(B, S-b, H_{B,b}, i_B+k-b, i_B+d_{\text{mid}}-k-1-b)$
    \STATE \ \ \ \ \ \ append $h$ to $L_B$
    \STATE \ \ \ sort $L_A$ and $L_B$
    \STATE \ \ \ \textbf{if} $L_A \cap L_B \neq \emptyset$
    \STATE \ \ \ \ \ \ \textbf{then} $d_0 \leftarrow d_{\text{mid}}$
    \STATE \ \ \ \ \ \ \textbf{else} $d_1 \leftarrow d_{\text{mid}}-1$.
    \RETURN $d_0$
  \end{algorithmic}
\label{alg:align-onesidedpreproc}
\end{algorithm}

\begin{lemma}\label{lem:two-sided-align}
  $\MaxAlign_k$ is approximately correct.
\end{lemma}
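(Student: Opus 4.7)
The plan is to verify both conditions in the definition of \emph{approximately correct} for Algorithm~\ref{alg:align-onesidedpreproc}. The key structural insight to keep in mind throughout is that any shift $c \in [-k, k]$ can be written as $c = a\sqrt{k} + b$ with $a, b \in [-\sqrt{k}, \sqrt{k}]$, so the two $O(\sqrt{k})$-sized families of shifts processed by $\method{ProcessA}_k$ and $\method{ProcessB}_k$ jointly realize every possible $k$-shift. Moreover, by shaving $k$ characters from each side of the length-$d_{\text{mid}}$ window (the ``$+k$'' and ``$-k-1$'' in the index arithmetic), the substrings being hashed live entirely inside $B[i_B, i_B+d_{\text{mid}}-1]$ and $A[i_B+c, i_B+c+d_{\text{mid}}-1]$ for every admissible shift $c$, so index-range issues disappear.

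For condition~(1), I will argue that the binary search maintains the invariant $d_1 \ge d$, where $d$ is the true maximum $k$-alignment length. Fix such an alignment with shift $c = a\sqrt{k}+b$. For any $d_{\text{mid}} \le d$, the truncated alignment still gives the exact equality $A[i_B+c+j] = B[i_B+j]$ over the whole window, which in particular implies that $A[i_B + k + a\sqrt{k} + t] = B[i_B + k - b + t]$ for every $t \in [0, d_{\text{mid}} - 2k - 1]$. A routine unpacking of $\method{RetrieveRollingHash}$ shows both hash computations are indexed by exactly the same subset $S \cap [i_B+k, i_B+d_{\text{mid}}-k-1]$, so Proposition~\ref{prop:hash-standard} (or equivalently Claim~\ref{claim:hash-random}(1)) gives equal hashes with probability $1$. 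Hence the branch $d_0 \leftarrow d_{\text{mid}}$ is always taken when $d_{\text{mid}} \le d$, and $d_1$ is never pushed below $d$. This yields $\MaxAlign_k(A,B,i_B) \ge d$ with probability $1$, as required.

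For condition~(2), I will argue the contrapositive direction. When the algorithm returns $d_0$, there is a pair $(a,b)$ whose hashes matched at $d_{\text{mid}} = d_0$. Applying Claim~\ref{claim:hash-random}(2) to that single comparison, except with probability $1/n^3$ the inner substrings $A[i_B+k+a\sqrt{k}, \ldots]$ and $B[i_B+k-b, \ldots]$ have Hamming distance strictly less than $k$. Translating back to the full length-$d_0$ windows starting at $i_A := i_B + a\sqrt{k} + b$ and $i_B$, the Hamming distance is at most $k$ (from the inner part) plus $2k$ (from the trimmed length-$k$ flanks on each side), which upper-bounds the edit distance by $3k$. Since $|i_A - i_B| = |a\sqrt{k} + b| \le k + \sqrt{k} \le 3k$, this exhibits an approximate $k$-alignment at length $d_0$. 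A union bound over the $O(\log n)$ comparisons performed in the binary search (and then an outer union bound over the $O(k)$ calls made by $\method{GreedyMatch}$, when invoked later) keeps the total failure probability well below $1/n^2$.

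The main obstacle, and the place where one has to be most careful, is the bookkeeping in condition~(1): one must verify that after the two asymmetric shifts ($+a\sqrt{k}$ applied during $\method{ProcessA}_k$ and $-b$ applied during $\method{ProcessB}_k$), the indices at which $A$ and $B$ are actually being \emph{sampled} coincide as subsets of $[i_B+k, i_B+d_{\text{mid}}-k-1]$, so that the hashes are computed over matched sample positions and hence agree deterministically when the underlying alignment is exact. Once this index arithmetic is laid out carefully, both conditions follow directly from the two parts of Claim~\ref{claim:hash-random}.
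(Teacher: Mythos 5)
Your proof is correct and follows essentially the same route as the paper's: decompose the shift as $c=a\sqrt{k}+b$, invoke Claim~\ref{claim:hash-random}(1) for the deterministic lower bound via the binary-search invariant, and Claim~\ref{claim:hash-random}(2) plus the $\pm k$ flank/shift accounting for the approximate-alignment guarantee. Your treatment is if anything slightly more careful than the paper's (the explicit invariant that the test succeeds for every $d_{\text{mid}}\le d$, and the check that the shifted sample sets coincide); the only nit is that the union bound is over the $O(k\log n)$ hash-pair comparisons, not $O(\log n)$, which still leaves the failure probability below $1/n^2$.
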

\begin{proof}
  First, consider any $d \ge 1$ such that $B[i_B, i_B+d-1]$ has a $k$-alignment in $A$. We seek to show that $\MaxAlign_k(A, B, i_B) \ge d$ with probability $1$. Note that the output of $\MaxAlign_k$ is always at least $2k$, so we may assume that $d > 2k$. By definition of $k$-alignment, there exists $c \in [-k, k]$ such that $A[i_B + c, i_B+d-1+c] = B[i_B, i_B+d-1]$. Note that there exists $a, b \in [-\sqrt{k}, \sqrt{k}]$ such that $a\sqrt{k} + b = c$ and so
  \[
    A[i_B + k + a\sqrt{k}, i_B + d-k-1+a\sqrt{k}] = B[i_B + k-b, i_B + d-k-1-b].
  \]
  By applying Claim~\ref{claim:hash-random}, we have with probability $1$ that
  \begin{align*}
     \method{RetrieveRollingHash}&(A, S+a\sqrt{k}, H_{A,a\sqrt{k}}, i_B + k + a\sqrt{k}, d-k-1+a\sqrt{k})\\&= \method{RetrieveRollingHash}(B, S-b, H_{B,b}, i_B+k-b, i_B+d-k-1-b).
  \end{align*}
  Therefore, in the implementation of $\MaxAlign_k(A, B, i_B)$, if $d_{\text{mid}} = d$, then $L_A$ and $L_B$ will have nontrivial intersection, so the output of the binary search will be at least $d$, as desired. Thus, $\MaxAlign_k(A, B, i_B)$ will output at least the length of the maximal $k$-alignment.

  Second, we verify that $\MaxAlign_k$ outputs an approximate $k$-alignment. Let $d$ be the output of $\MaxAlign_k$, either $d = 2k$, in which case $B[i_B, i_B+d-1]$ trivially is in approximate $k$-alignment with $A[i_B, i_B+d-1]$ or $d > 2k$. Thus, for that $d$, the binary search found that $L_A \cap L_B \neq \emptyset$ and so there exists $a, b \in [-\sqrt{k}, \sqrt{k}]$ such that
  \begin{align*}
     \method{RetrieveRollingHash}&(A, S+a\sqrt{k}, H_{A,a\sqrt{k}}, i_B + k + a\sqrt{k}, d-k-1+a\sqrt{k})\\&= \method{RetrieveRollingHash}(B, S-b, H_{B,b}, i_B+k-b, i_B+d-k-1-b).
  \end{align*}
  Applying Claim~\ref{claim:hash-random} over all $\tilde{O}(\sqrt{k}^2) = \tilde{O}(k)$ comparisons of hashes made during the algorithm, with probability at least $1 - 1/n^3$, we must have that
  \[
    \ED(A[i_B + k + a\sqrt{k}, d-k-1+a\sqrt{k}], B[i_B+k-b, i_B+d-k-1-b]) \le k.
  \]
  Let $c := a\sqrt{k} + b$%
   then we have that
  \[
    \ED(A[i_B + k + c-b, i_B+d-k-1+c-b], B[i_B+k-b, i_B+d-k-1-b]) \le k
  \]
  so
  \[
    \ED(A[i_B+c, i_B+d-1+c], B[i_B, i_B+d-1]) \le 3k.
  \]
  Since $c = a\sqrt{k} + b \in [-3k, 3k]$, we have that $B[i_B, i_B+d-1]$ has an approximate $k$-alignment, as desired.
  \end{proof}

\begin{theorem}\label{thm:no-preprocessing}
  For $k \le O(\sqrt{n})$, with no preprocessing, we can distinguish between $\ED(A, B) \le k$ and $\ED(A, B) > 40k^2$ in $\tilde{O}(n/\sqrt{k})$ time with probability at least $1 - \frac{1}{n}$.
\end{theorem}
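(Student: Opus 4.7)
The plan is to invoke the meta-algorithm $\method{GreedyMatch}$ (Algorithm~\ref{alg:jaggedmatch}) with $\MaxAlign_k$ implemented via Algorithm~\ref{alg:align-onesidedpreproc} (together with the preprocessing from $\method{ProcessA}_k$ and $\method{ProcessB}_k$), then combine the correctness guarantee of Lemma~\ref{lem:jagged-match} with Lemma~\ref{lem:two-sided-align} (which shows the implementation is approximately correct) and a separate runtime analysis.

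For correctness, I would simply chain the two lemmas: Lemma~\ref{lem:two-sided-align} tells us that this implementation of $\MaxAlign_k$ is approximately correct (with the $1/n^2$ per-call error already baked in), and Lemma~\ref{lem:jagged-match} then guarantees that $\method{GreedyMatch}$ returns SMALL when $\ED(A,B) \le k$ (with probability $1$) and returns LARGE when $\ED(A,B) > 40k^2$ (with probability at least $1-1/n$). A union bound over the $O(k)$ invocations of $\MaxAlign_k$ is absorbed in the $1-1/n$ factor as recorded in the lemma.

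The real work is the runtime accounting. I would fix the subsample $S \subseteq [n]$ to include each index independently with probability $\alpha = \min(4 \ln n / k, 1)$, so that $|S| = \tilde{O}(n/k)$ w.h.p.\ and the hashing guarantees of Claim~\ref{claim:hash-random} kick in. Then $\method{ProcessA}_k$ and $\method{ProcessB}_k$ each run $\method{InitRollingHash}$ on $O(\sqrt{k})$ shifted versions of $S$, for a total preprocessing cost of $O(\sqrt{k}) \cdot \tilde{O}(n/k) = \tilde{O}(n/\sqrt{k})$. Each invocation of $\MaxAlign_k$ performs an $O(\log n)$-round binary search, and in each round constructs and sorts lists of $O(\sqrt{k})$ hashes using $\tilde{O}(1)$-time calls to $\method{RetrieveRollingHash}$, for $\tilde{O}(\sqrt{k})$ work per call. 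Since $\method{GreedyMatch}$ makes $O(k)$ calls, the total query cost is $\tilde{O}(k^{1.5})$. For $k \le O(\sqrt{n})$ we have $k^{1.5} \le n^{3/4} \le n/\sqrt{k}$, so the preprocessing dominates and the overall runtime is $\tilde{O}(n/\sqrt{k})$.

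The main potential obstacle is just being careful that (i) the probability-$1$ first bullet of ``approximately correct'' really does apply here (even though the subsampling is random, if $A[i_A,\cdot]=B[i_B,\cdot]$ the hashes always match by part (1) of Claim~\ref{claim:hash-random}, so the written claim is indeed probability $1$), and (ii) the $\tilde{O}(k)$ pairs of hash comparisons across one $\MaxAlign_k$ invocation satisfy Claim~\ref{claim:hash-random} part (2) simultaneously via a union bound, which is absorbed in the $1-1/n^2$ per-call error that Lemma~\ref{lem:jagged-match} already expects. Everything else is routine bookkeeping.
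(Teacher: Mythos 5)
Your proposal is correct and follows exactly the paper's own argument: chain Lemma~\ref{lem:two-sided-align} (approximate correctness of the no-preprocessing $\MaxAlign_k$) with Lemma~\ref{lem:jagged-match}, and account for $\tilde{O}(n/\sqrt{k})$ preprocessing plus $O(k)$ calls at $\tilde{O}(\sqrt{k})$ each, giving $\tilde{O}(n/\sqrt{k} + k^{3/2}) = \tilde{O}(n/\sqrt{k})$ for $k \le O(\sqrt{n})$. Your additional remarks on the probability-$1$ guarantee and the union bound over hash comparisons are consistent with how the paper handles these points inside the cited lemmas.
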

\begin{proof}
By Lemma~\ref{lem:two-sided-align}, $\method{MaxAlign}_k$ is approximately correct so by Lemma~\ref{lem:jagged-match} succeeds with high enough probability that $\method{GreedyMatch}$ outputs the correct answer with probability at least $1 - \frac{1}{n}$.

  By inspection, both $\method{ProcessA}_k$ and $\method{ProcessB}_k$ run in $\tilde{O}(n/\sqrt{k})$ time in expectation. Further, $\method{MaxAlign}_k$ runs in $\tilde{O}(\sqrt{k})$ time, so $\method{GreedyMatch}$ runs in $\tilde{O}(n/\sqrt{k}+k^{3/2}) = \tilde{O}(n/\sqrt{k})$ time.
\end{proof}

\section{One-sided Preprocessing}  \label{sec:one-sided}

For the one-sided preprocessing, we desire to get near-linear preprocessing time. To do that, $\MaxAlign_k$ shall be half approximately correct rather than approximately correct.

Recall as before we preselect $S \subset [1,n]$ with each element included i.i.d.~with probability $q := \min(\frac{4\ln n}{k}, 1)$. Also assume that every multiple of $k$ is in $S$ and that $n-1$ is in $S$.%
This only increases the size of $S$ by $n/k$, and does not hurt the success probability of Claim~\ref{claim:hash-random}. To achieve near-linear preprocessing, we only store $ \method{RetrieveRollingHash}(A, S+a, H_{A,a}, i+a, i+2^{i_0}-1+a)$, when $(S+a) \cap [i+a, i+2^{i_0}-1+a]$ changes. This happens when $i \in (S+1) \cup (S - 2^{i_0}+1)$.

\begin{algorithm}[h]
\caption{$\method{OneSidedPreprocessA}_k(A)$}
  \begin{algorithmic}
    \STATE \textbf{for} $a$ \textbf{from} $-k$ \textbf{to} $k$
    \STATE \ \ \ $H_{A,a} \leftarrow \method{InitRollingHash}(A, S+a)$
    \STATE $\mathcal T_A \leftarrow$  $\lfloor \log n\rfloor \times \frac{n}{k}$ matrix of empty hash tables
    \STATE \textbf{for} $i_0$ \textbf{in} $[\lfloor \log n \rfloor]$
    \STATE \ \ \ \textbf{for} $a$ \textbf{from} $-k$ \textbf{to} $k$
    \STATE \ \ \ \ \ \ \textbf{for} $i$ \textbf{in} $((S+1) \cup (S - 2^{i_0}+1))$ \textbf{ with } $[i+a, i+2^{i_0}-1+a] \subset [n]$
    \STATE \ \ \ \ \ \ \ \ \ $h \leftarrow \method{RetrieveRollingHash}(A, S+a, H_{A,a}, i+a, i+2^{i_0}-1+a)$
    \STATE \ \ \ \ \ \ \ \ \ add $h$ to $T_A[i_0, \lfloor i/k\rfloor - 1]$.
    \STATE \ \ \ \ \ \ \ \ \ add $h$ to $T_A[i_0, \lfloor i/k\rfloor]$.
    \STATE \ \ \ \ \ \ \ \ \ add $h$ to $T_A[i_0, \lfloor i/k\rfloor + 1]$.
    \RETURN $T_A$
  \end{algorithmic}
\label{alg:processA}
\end{algorithm}

\begin{claim}\label{claim:preproc-fast}
  $\method{OneSidedPreprocesA}(A)$ runs in $\tilde{O}(n)$ time in expectation.
\end{claim}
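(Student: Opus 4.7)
The plan is to decompose the work into two pieces: the $2k+1$ initial rolling-hash constructions, and the triple-nested loop that fills $\mathcal T_A$. Both pieces are controlled by $|S|$, whose expectation I would bound first.

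First I would show that $\E[|S|] = \tilde{O}(n/k)$. Each index of $[1,n]$ is included in $S$ independently with probability $q = \min(4\ln n / k, 1)$, contributing $nq = \tilde{O}(n/k)$ in expectation; the deterministic additions (multiples of $k$, plus $n-1$) contribute at most $n/k+1$ more. Note that $|S+a|=|S|$ for every integer shift $a$, so each call $\method{InitRollingHash}(A, S+a)$ runs in $\tilde{O}(|S|)$ time. Summing over $a\in\{-k,\dots,k\}$ yields expected cost $(2k+1)\cdot \tilde{O}(n/k) = \tilde{O}(n)$.

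Next I would bound the triple-nested loop. The outer loop has $O(\log n)$ values of $i_0$; the middle loop has $2k+1$ values of $a$; and the innermost enumeration ranges over $(S+1)\cup(S-2^{i_0}+1)$, whose size is at most $2|S|$ and hence has expectation $\tilde{O}(n/k)$. The loop body performs a constant number of $\tilde{O}(1)$-time operations (one rolling-hash retrieval via $\method{RetrieveRollingHash}$ and three hash-table insertions), so by linearity of expectation the total work of this phase is
\[
O(\log n)\cdot(2k+1)\cdot \tilde{O}(n/k)\cdot \tilde{O}(1) \;=\; \tilde{O}(n).
\]
Adding the two pieces gives the claimed $\tilde{O}(n)$ expected runtime.

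The argument is essentially bookkeeping rather than containing a hard step. The only subtlety to flag is that $|S|$ is random, so the statement is about \emph{expected} time; one must also observe that integer shifts preserve cardinality, which is what lets a single bound on $\E[|S|]$ control all $2k+1$ of the shifted subsamples uniformly.
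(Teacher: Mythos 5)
Your proof is correct and follows the same decomposition as the paper's: bound $\E[|S|]=\tilde{O}(n/k)$, charge $\tilde{O}(|S|)$ to each of the $2k+1$ calls to $\method{InitRollingHash}$, and bound the remaining loops by $O(\log n)\cdot O(k)\cdot\tilde{O}(n/k)\cdot\tilde{O}(1)$. You simply spell out the bookkeeping (and the observation that shifts preserve $|S|$) in more detail than the paper does.
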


\begin{proof}
  Computing $\method{InitRollingHash}(A, S+a)$ 
  takes $|S| = \tilde{O}(n/k)$ time in expectation. Thus, computing the $H_{A,a}$'s 
  takes $\tilde{O}(n)$ time. The other loops take (amortized) $\tilde{O}(1) \cdot O(k) \cdot \tilde{O}(n/k) = \tilde{O}(n)$ time.
\end{proof}

Before we call $\method{GreedyMatch}$, we need to initialize the hash function for $B$ using $\method{OneSidedProcessB}(B)$. This takes $\tilde{O}(n/k)$ time in expectation.

\begin{algorithm}[h]
\caption{$\method{OneSidedProcessB}(B)$}
  \begin{algorithmic}
    \RETURN $H_{B} \leftarrow \method{InitRollingHash}(B, S)$
  \end{algorithmic}
\end{algorithm}

\begin{algorithm}[h]
\caption{$\method{OneSidedMaxAlign}_k(A, B, i_B)$}
  \begin{algorithmic}
    \INPUT $A \in \Sigma^n, B \in \Sigma^n$, $k\le n$,$i_b \in [1, n]$
    \STATE \textbf{for} $d \in [2^{\lfloor \log n\rfloor}, 2^{\lfloor \log n\rfloor-1}, \hdots, 1]$
    \STATE \ \ \ \textbf{if} $\method{RetrieveRollingHash}(B, S, H_B, i_B, i_B+d-1) \in T_A[\log d, \lfloor i_B/k\rfloor ]$ \textbf{then} \textbf{return} \ $d$
    \RETURN 0
  \end{algorithmic}
\label{alg:pseudoalign-onesidedpreproc}
\end{algorithm}

\begin{lemma}\label{lem:one-sided-pseudo-align}
 $\method{OneSidedMaxAlign}_{k}$ is half approximately correct. 
\end{lemma}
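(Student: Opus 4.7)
The plan is to verify the two conditions of ``half approximately correct'' separately.

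For \textbf{Property 1} (the algorithm returns more than $d/2$ when $B[i_B, i_B+d-1]$ has a $k$-alignment of length $d$), let $c \in [-k, k]$ realize the alignment and set $d' := 2^{\lfloor \log d \rfloor}$, so $d' > d/2$. I claim the iteration at $d = d'$ of $\method{OneSidedMaxAlign}_k$ finds a hash match. Three observations suffice. First, since every multiple of $k$ is in $S$, both $S+1$ and $S - d' + 1$ intersect every window of length $k$, so the boundary set $(S+1) \cup (S - d' + 1)$ has no gap larger than $k$, and there exists a boundary $i^*$ with $i^* \le i_B < i^* + k$. Second, on each segment delimited by consecutive boundaries the sample set $S \cap [i, i + d' - 1]$ is constant in $i$; since $i^*$ and $i_B$ lie in a common such segment, the stored hash at $(i^*, c)$ equals the value a hypothetical stored entry at $(i_B, c)$ would take. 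Third, the alignment $A[i_B + c + j] = B[i_B + j]$ for $j \in [0, d'-1]$ makes the sub-sampled characters $\{A[s + c]\}_{s \in S \cap [i_B, i_B + d' - 1]}$ coincide with $\{B[s]\}_{s \in S \cap [i_B, i_B + d' - 1]}$, so this hypothetical hash equals the query hash. Finally $i_B - i^* < k$ implies $|\lfloor i_B/k\rfloor - \lfloor i^*/k\rfloor| \le 1$, so the stored hash at $(i^*, c)$ is recorded in $T_A[\log d', \lfloor i_B / k\rfloor]$ and the lookup succeeds.

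For \textbf{Property 2} (the returned $d > 0$ witnesses an approximate $k$-alignment), let $(i, a)$ be a pair whose stored hash matches the query, with $|\lfloor i/k\rfloor - \lfloor i_B/k\rfloor| \le 1$ and $a \in [-k, k]$. Setting $i_A := i + a$, the shift bound $|i_A - i_B| \le (2k - 1) + k \le 3k$ is immediate. For the edit-distance bound, the plan is to invoke Proposition~\ref{prop:hash-standard} together with a union bound over the $\tilde O(n)$ hashes produced in preprocessing and the $O(\log n)$ queries, ruling out spurious hash collisions with probability at least $1 - 1/n^2$. Conditioned on this event, a hash equality reflects an equality of the underlying sub-sampled sequences. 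Define $A^a[j] := A[j + a]$, so the stored hash equals the hash of $A^a[i, i + d - 1]$ under $S$. When $i$ and $i_B$ lie in the same segment for length $d$, the two sub-sampled sequences come from the same index set $S \cap [i_B, i_B + d - 1]$, so Claim~\ref{claim:hash-random}(2) applied to $A^a$ and $B$ on $[i_B, i_B + d - 1]$ yields $\Ham(A^a[i_B, i_B + d - 1], B[i_B, i_B + d - 1]) \le k - 1$, and therefore $\ED \le k \le 3k$.

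The main obstacle is the cross-segment case, where the two sub-sampled index sets $S \cap [i, i + d - 1]$ and $S \cap [i_B, i_B + d - 1]$ differ by up to $2k$ positions at their endpoints. The plan for this case is to exploit that $S$ contains every multiple of $k$, so consecutive samples within any window of length $k$ are guaranteed, forcing the gap pattern in $A[i_A, i_A + d - 1]$ and $B[i_B, i_B + d - 1]$ to coincide up to an $O(k)$-error concentrated at the endpoints. Combined with a shifted-range variant of Claim~\ref{claim:hash-random}---samples outside the common overlap hit any mismatch with high probability---this should show that a sub-sampled match in the cross-segment case already implies ED at most $3k$; the $3k$-slack in both the shift and the edit-distance in the definition of approximate $k$-alignment is precisely what absorbs the cross-segment endpoint discrepancies.
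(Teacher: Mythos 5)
Your treatment of Property 1 is complete and essentially identical to the paper's: pick the largest power of two $d'\le d$ (so $d'>d/2$), use the fact that the stored hash is constant on segments between consecutive points of $(S+1)\cup(S-d'+1)$, that these segments have length at most $k$ because $S$ contains every multiple of $k$, and that the $\{-1,0,1\}$ bucket slack absorbs the resulting discrepancy between $\lfloor i^*/k\rfloor$ and $\lfloor i_B/k\rfloor$. For Property 2, your shift bound $|i_A-i_B|\le 3k$ and your handling of the case where the matched table entry comes from a boundary point $i$ in the same segment as $i_B$ (identical sample index sets, so Claim~\ref{claim:hash-random}(2) applies to $A$ shifted by $a$ versus $B$ on a common window) are also fine and match the paper's intent.

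The gap is the cross-segment case of Property 2, which you explicitly leave as a plan (``this should show that\dots''). This is not a corner case: the bucket $T_A[\log d,\lfloor i_B/k\rfloor]$ contains stored hashes from \emph{every} boundary point $i$ with $|\lfloor i/k\rfloor-\lfloor i_B/k\rfloor|\le 1$, and boundary points can be dense, so the matched entry will generically come from a segment other than $i_B$'s. There the two subsampled index sets $S\cap[i,i+d-1]$ and $S\cap[i_B,i_B+d-1]$ differ at both ends, so equality of the subsampled sequences aligns the $j$-th sample of one window with the $(j+m_1)$-th sample of the other; the induced positional shift $s_{j+m_1}-s_j$ is not the fixed offset $i-i_B$ but a quantity that varies with $j$ and is only controlled in distribution over the random $S$, so your proposed ``shifted-range variant of Claim~\ref{claim:hash-random}'' is exactly the statement that still needs to be formulated and proved, and it does not follow from the claims in the paper as stated. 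To be fair, you have isolated a point that the paper's own proof dispatches with a bare invocation of Claim~\ref{claim:hash-random} applied to windows with different start points and different sample sets, which that claim does not literally cover; but identifying the obstacle is not the same as overcoming it, and as written your argument establishes Property 2 only when the matched boundary point lies in $i_B$'s segment.
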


\begin{proof}
  First,  consider the maximal $d' \ge 1$ a power of two such that $B[i_B, i_B+d'-1]$ has a $k$-alignment in $A$. We seek to show that $\method{OneSidedMaxAlign}_k(A, B, i_B) \ge d'$ with probability $1$. By definition of $k$-alignment, there exists $a\in [-k, k]$ such that $A[i_B + a, i_B+d'-1+a] = B[i_B, i_B+d'-1]$. 
  By applying Claim~\ref{claim:hash-random}, we have with probability $1$ that
  \begin{align*}
     \method{RetrieveRollingHash}&(A, S+a, H_{A,a}, i_B + a, i_B+d'-1+a)\\&= \method{RetrieveRollingHash}(B, S, H_B, i_B, i_B+d'-1).
  \end{align*}
  Let $i'_B$ be the least integer in $((S+1) \cup (S-d'+1)) \cap [n]$ which is at least $i_B$. Since $S$ contains every multiple of $k$ (and $n-1$), we must have that $|i'_B - i_B| \le k$. Therefore,
  \begin{align*}
    \method{RetrieveRollingHash}&(A, S+a, H_{A,a}, i_B + a, i_B+d'-1+a)\\
                                &= \method{RetrieveRollingHash}(A, S+a, H_{A,a}, i'_B + a, i'_B+d'-1+a)\\
                                &\in T_A[\log d, \lfloor i'_B/k\rfloor + \{-1, 0, 1\} ].
  \end{align*} 

  Since $\lfloor i'_B/k\rfloor - \lfloor i_B/k\rfloor \in  \{-1, 0, 1\}$. We have that
  if $d = d'$, $\method{RetrieveRollingHash}(B, S, H_B, i_B, i_B+d'-1) \in T_A[\log d, \lfloor i_B/k\rfloor ]$. Thus, $\method{OneSidedMaxAlign}_k(A, B, i_B)$ 
 will output at least more than half the length of the maximal $k$-alignment.

  Second, we verify that $\method{OneSidedMaxAlign}_k$ outputs an approximate $k$-alignment. Let $d$ be the output of $\method{OneSidedMaxAlign}_k$, either $d = 0$, in which case $B[i_B, i_B+d-1]$ trivially is in approximate $k$-alignment with $A[i_B, i_B+d-1]$ or $d \ge 1$. Thus, for that $d$, the search found that $\method{RetrieveRollingHash}(B, S, H_B, i_B, i_B+d'-1) \in T_A[\log d, \lfloor i_B/k\rfloor ]$. Thus, there exists, $i'_B$ with $|\lfloor i'_B/k\rfloor - \lfloor i_B/k\rfloor| \le 1$ and $a \in [-k, k]$ such that
  \begin{align*}
    \method{RetrieveRollingHash}&(A, S+a, H_{A,a}, i'_B + a, i'_B+d'-1+a)\\&= \method{RetrieveRollingHash}(B, S, H_B, i_B, i_B+d'-1).
  \end{align*}
  Applying Claim~\ref{claim:hash-random} over all $\tilde{O}(k)$ potential comparisons of hashes made during the algorithm, with probability at least $1 - 1/n^3$, we must have that
  \[
    \ED(A[i'_B + a, i'_B+a+d'-1], B[i_B, i_B+d'-1]) \le k.
  \]
  Note that $|i'_B + a - i_B| \le |i'_B - i_B| + |a| \le 3k$. Thus $B[i_B, i_B+d'-1]$ has an approximate $k$-alignment, as desired.
  \end{proof}

  \begin{theorem}\label{thm:1-preprocessing}
  For all $A, B \in \Sigma^n$. When $A$ is preprocessed for $\tilde{O}(n)$ time in expectation, we can distinguish between $\ED(A, B) \le k/(2\log n)$ and $\ED(A, B) > 40k^2$ in $\tilde{O}(n/k)$ time with probability at least $1 - \frac{1}{n}$ over the random bits in the preprocessing (oblivious to $B$).
\end{theorem}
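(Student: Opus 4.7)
The plan is to assemble Theorem~\ref{thm:1-preprocessing} as a direct composition of the pieces already proved. The algorithm is: in the preprocessing phase, sample the random subset $S$ and hash seed $x$ and run $\method{OneSidedPreprocessA}(A)$ to produce the tables $\{H_{A,a}\}$ and $T_A$. At query time, first run $\method{OneSidedProcessB}(B)$ to obtain $H_B$, and then call $\method{GreedyMatch}(A,B,k)$ using $\method{OneSidedMaxAlign}_k$ from Algorithm~\ref{alg:pseudoalign-onesidedpreproc} as the implementation of $\MaxAlign_k$. Note that the random bits of $S$ and $x$ are fixed during preprocessing (and are oblivious to $B$), and we commit to them before $B$ is seen.

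For correctness, Lemma~\ref{lem:one-sided-pseudo-align} guarantees that $\method{OneSidedMaxAlign}_k$ is half approximately correct. Plugging this into Lemma~\ref{lem:jagged-match} immediately gives both directions: if $\ED(A,B)\le k/(2\log n)$ then with probability $1$ the algorithm outputs SMALL, and if $\ED(A,B)>40k^2$ then with probability at least $1-1/n$ it outputs LARGE. To make the last statement uniform over $B$, I would union-bound Claim~\ref{claim:hash-random} over the polynomially many possible triples $(i_B,d,a)$ (there are at most $n\cdot \log n\cdot(2k+1)$ of them) that could conceivably be queried across all choices of $B$; since each fails with probability $\le 1/n^3$, the overall failure contribution from hashing is $o(1/n)$, which is absorbed into the $1/n$ of Lemma~\ref{lem:jagged-match}.

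For the running time, Claim~\ref{claim:preproc-fast} gives preprocessing time $\tilde{O}(n)$ in expectation. At query time, computing $H_B=\method{InitRollingHash}(B,S)$ costs $\tilde{O}(|S|)=\tilde{O}(n/k)$ in expectation. Then $\method{GreedyMatch}$ makes $O(k)$ calls to $\method{OneSidedMaxAlign}_k$; each call loops over $O(\log n)$ values of $d$, performing one $\tilde{O}(1)$-time rolling-hash retrieval and one $\tilde{O}(1)$-time hash-table lookup per iteration, for a total of $\tilde{O}(1)$ per call. Summing gives query time $\tilde{O}(n/k) + O(k)\cdot\tilde{O}(1) = \tilde{O}(n/k + k)$, which is $\tilde{O}(n/k)$ in the relevant regime $k\le\sqrt{n}$ (and in any case the problem is trivially solvable by direct computation when $k$ is larger, so we may assume this WLOG).

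The only delicate point, and the one I would pay closest attention to, is the interaction between the ``oblivious to $B$'' phrasing and the union bound. Since the preprocessing randomness is fixed before $B$ arrives, the adversary could in principle choose $B$ adversarially; fortunately the union bound above ranges over all potential query triples rather than over the random bits, so the guarantee is truly uniform over $B$. Everything else is routine bookkeeping of what Lemmas~\ref{lem:jagged-match} and~\ref{lem:one-sided-pseudo-align} already deliver.
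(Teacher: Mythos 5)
Your proposal is correct and follows essentially the same route as the paper: compose $\method{OneSidedPreprocessA}$, $\method{OneSidedProcessB}$, and $\method{GreedyMatch}$ with $\method{OneSidedMaxAlign}_k$, invoke Lemma~\ref{lem:one-sided-pseudo-align} (half approximate correctness) and Lemma~\ref{lem:jagged-match} for correctness, and Claim~\ref{claim:preproc-fast} plus the $\tilde{O}(1)$-per-query analysis for the $\tilde{O}(n/k+k)=\tilde{O}(n/k)$ running time. Your extra union bound over all potential query triples to justify obliviousness to $B$ is a careful elaboration the paper leaves implicit, not a different argument.
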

\begin{proof}
By Lemma~\ref{lem:one-sided-pseudo-align}, $\method{OneSidedMaxAlign}_k$ is half approximately correct so by Lemma~\ref{lem:jagged-match} succeeds with high enough probability that $\method{GreedyMatch}$ outputs the correct answer with probability at least $1 - \frac{1}{n}$.

  By Claim~\ref{claim:preproc-fast}, the preprocessing runs in $\tilde{O}(n)$ time. Also $\method{OneSidedProcessB}$ runs in $\tilde{O}(n/k)$ time. Further, $\method{OneSidedMaxAlign}_k$ runs in $\tilde{O}(1)$ time, as performing the power-of-two search, computing the hash, and doing the table lookups are $\tilde{O}(1)$ operations), so the one-sided preprocessing version of $\method{GreedyMatch}$ runs in $\tilde{O}(n/k+k) = \tilde{O}(n/k)$ time.
\end{proof}

\bibliographystyle{alpha}
\bibliography{sublinear}

\newcommand{\etalchar}[1]{$^{#1}$}
\begin{thebibliography}{AHWW16}

\bibitem[AB18]{AB18}
Amir Abboud and Karl Bringmann.
\newblock Tighter connections between formula-sat and shaving logs.
\newblock In {\em 45th International Colloquium on Automata, Languages, and
  Programming, {ICALP} 2018, July 9-13, 2018, Prague, Czech Republic}, pages
  8:1--8:18, 2018.

\bibitem[ABBK17]{ABBK17}
Amir Abboud, Arturs Backurs, Karl Bringmann, and Marvin K{\"{u}}nnemann.
\newblock Fine-grained complexity of analyzing compressed data: Quantifying
  improvements over decompress-and-solve.
\newblock In {\em 58th {IEEE} Annual Symposium on Foundations of Computer
  Science, {FOCS} 2017, Berkeley, CA, USA, October 15-17, 2017}, pages
  192--203, 2017.

\bibitem[AHWW16]{AHWW16-polylog_shaved}
Amir Abboud, Thomas~Dueholm Hansen, Virginia~Vassilevska Williams, and Ryan
  Williams.
\newblock Simulating branching programs with edit distance and friends: or: a
  polylog shaved is a lower bound made.
\newblock In {\em Proceedings of the 48th Annual {ACM} {SIGACT} Symposium on
  Theory of Computing, {STOC} 2016, Cambridge, MA, USA, June 18-21, 2016},
  pages 375--388, 2016.

\bibitem[AK12]{AK12}
Alexandr Andoni and Robert Krauthgamer.
\newblock The smoothed complexity of edit distance.
\newblock {\em {ACM} Trans. Algorithms}, 8(4):44:1--44:25, 2012.

\bibitem[AKO10]{AKO10-edit}
Alexandr Andoni, Robert Krauthgamer, and Krzysztof Onak.
\newblock Polylogarithmic approximation for edit distance and the asymmetric
  query complexity.
\newblock In {\em 51th Annual {IEEE} Symposium on Foundations of Computer
  Science, {FOCS} 2010, October 23-26, 2010, Las Vegas, Nevada, {USA}}, pages
  377--386, 2010.

\bibitem[AN10]{AN10}
Alexandr Andoni and Huy~L. Nguyen.
\newblock Near-optimal sublinear time algorithms for ulam distance.
\newblock In {\em Proceedings of the Twenty-First Annual {ACM-SIAM} Symposium
  on Discrete Algorithms, {SODA} 2010, Austin, Texas, USA, January 17-19,
  2010}, pages 76--86, 2010.

\bibitem[AN20]{AN20}
Alexandr Andoni and Negev~Shekel Nosatzki.
\newblock Edit distance in near-linear time: it's a constant factor.
\newblock {\em CoRR}, abs/2005.07678, 2020.

\bibitem[AO12]{AO12-edit}
Alexandr Andoni and Krzysztof Onak.
\newblock Approximating edit distance in near-linear time.
\newblock {\em {SIAM} J. Comput.}, 41(6):1635--1648, 2012.

\bibitem[Apo86]{Apostolico86}
Alberto Apostolico.
\newblock Improving the worst-case performance of the hunt-szymanski strategy
  for the longest common subsequence of two strings.
\newblock {\em Inf. Process. Lett.}, 23(2):63--69, 1986.

\bibitem[BCLW19]{BCLW19}
Omri Ben{-}Eliezer, Cl{\'{e}}ment~L. Canonne, Shoham Letzter, and Erik
  Waingarten.
\newblock Finding monotone patterns in sublinear time.
\newblock {\em CoRR}, abs/1910.01749, 2019.

\bibitem[BEG{\etalchar{+}}18]{BEGHS18-edit-quantum}
Mahdi Boroujeni, Soheil Ehsani, Mohammad Ghodsi, Mohammad~Taghi Hajiaghayi, and
  Saeed Seddighin.
\newblock Approximating edit distance in truly subquadratic time: Quantum and
  mapreduce.
\newblock In {\em Proceedings of the Twenty-Ninth Annual {ACM-SIAM} Symposium
  on Discrete Algorithms, {SODA} 2018, New Orleans, LA, USA, January 7-10,
  2018}, pages 1170--1189, 2018.

\bibitem[BEK{\etalchar{+}}03]{BEKMRRS03-edit-testing}
Tugkan Batu, Funda Erg{\"{u}}n, Joe Kilian, Avner Magen, Sofya Raskhodnikova,
  Ronitt Rubinfeld, and Rahul Sami.
\newblock A sublinear algorithm for weakly approximating edit distance.
\newblock In {\em Proceedings of the 35th Annual {ACM} Symposium on Theory of
  Computing, June 9-11, 2003, San Diego, CA, {USA}}, pages 316--324, 2003.

\bibitem[BES06]{BES06-edit}
Tugkan Batu, Funda Erg{\"{u}}n, and S{\"{u}}leyman~Cenk Sahinalp.
\newblock Oblivious string embeddings and edit distance approximations.
\newblock In {\em Proceedings of the Seventeenth Annual {ACM-SIAM} Symposium on
  Discrete Algorithms, {SODA} 2006, Miami, Florida, USA, January 22-26, 2006},
  pages 792--801, 2006.

\bibitem[BI18]{BI18}
Arturs Backurs and Piotr Indyk.
\newblock Edit distance cannot be computed in strongly subquadratic time
  (unless {SETH} is false).
\newblock {\em {SIAM} J. Comput.}, 47(3):1087--1097, 2018.

\bibitem[BJKK04]{BJKK04-edit}
Ziv Bar{-}Yossef, T.~S. Jayram, Robert Krauthgamer, and Ravi Kumar.
\newblock Approximating edit distance efficiently.
\newblock In {\em 45th Symposium on Foundations of Computer Science {(FOCS}
  2004), 17-19 October 2004, Rome, Italy, Proceedings}, pages 550--559, 2004.

\bibitem[BK15]{BK15-LCS}
Karl Bringmann and Marvin K{\"{u}}nnemann.
\newblock Quadratic conditional lower bounds for string problems and dynamic
  time warping.
\newblock In {\em {IEEE} 56th Annual Symposium on Foundations of Computer
  Science, {FOCS} 2015, Berkeley, CA, USA, 17-20 October, 2015}, pages 79--97,
  2015.

\bibitem[BK18]{BK18}
Karl Bringmann and Marvin K{\"{u}}nnemann.
\newblock Multivariate fine-grained complexity of longest common subsequence.
\newblock In {\em Proceedings of the Twenty-Ninth Annual {ACM-SIAM} Symposium
  on Discrete Algorithms, {SODA} 2018, New Orleans, LA, USA, January 7-10,
  2018}, pages 1216--1235, 2018.

\bibitem[BR20]{BR20}
Joshua Brakensiek and Aviad Rubinstein.
\newblock Constant-factor approximation of near-linear edit distance in
  near-linear time.
\newblock In Konstantin Makarychev, Yury Makarychev, Madhur Tulsiani, Gautam
  Kamath, and Julia Chuzhoy, editors, {\em Proccedings of the 52nd Annual {ACM}
  {SIGACT} Symposium on Theory of Computing, {STOC} 2020, Chicago, IL, USA,
  June 22-26, 2020}, pages 685--698. {ACM}, 2020.

\bibitem[BSS20]{BSS20}
Mahdi Boroujeni, Masoud Seddighin, and Saeed Seddighin.
\newblock Improved algorithms for edit distance and {LCS:} beyond worst case.
\newblock In Shuchi Chawla, editor, {\em Proceedings of the 2020 {ACM-SIAM}
  Symposium on Discrete Algorithms, {SODA} 2020, Salt Lake City, UT, USA,
  January 5-8, 2020}, pages 1601--1620. {SIAM}, 2020.

\bibitem[BZ16]{BZ16}
Djamal Belazzougui and Qin Zhang.
\newblock Edit distance: Sketching, streaming, and document exchange.
\newblock In {\em 2016 IEEE 57th Annual Symposium on Foundations of Computer
  Science (FOCS)}, pages 51--60. IEEE, 2016.

\bibitem[CDG{\etalchar{+}}18]{CDGKS18}
Diptarka Chakraborty, Debarati Das, Elazar Goldenberg, Michal Kouck{\'{y}}, and
  Michael~E. Saks.
\newblock Approximating edit distance within constant factor in truly
  sub-quadratic time.
\newblock In Mikkel Thorup, editor, {\em 59th {IEEE} Annual Symposium on
  Foundations of Computer Science, {FOCS} 2018, Paris, France, October 7-9,
  2018}, pages 979--990. {IEEE} Computer Society, 2018.

\bibitem[CGK16]{CGK16}
Diptarka Chakraborty, Elazar Goldenberg, and Michal Kouck{\'{y}}.
\newblock Streaming algorithms for embedding and computing edit distance in the
  low distance regime.
\newblock In Daniel Wichs and Yishay Mansour, editors, {\em Proceedings of the
  48th Annual {ACM} {SIGACT} Symposium on Theory of Computing, {STOC} 2016,
  Cambridge, MA, USA, June 18-21, 2016}, pages 712--725. {ACM}, 2016.

\bibitem[CGKK18]{CGKK18-ED}
Moses Charikar, Ofir Geri, Michael~P. Kim, and William Kuszmaul.
\newblock On estimating edit distance: Alignment, dimension reduction, and
  embeddings.
\newblock In Ioannis Chatzigiannakis, Christos Kaklamanis, D{\'{a}}niel Marx,
  and Donald Sannella, editors, {\em 45th International Colloquium on Automata,
  Languages, and Programming, {ICALP} 2018, July 9-13, 2018, Prague, Czech
  Republic}, volume 107 of {\em LIPIcs}, pages 34:1--34:14. Schloss Dagstuhl -
  Leibniz-Zentrum fuer Informatik, 2018.

\bibitem[CK06]{CK06}
Moses Charikar and Robert Krauthgamer.
\newblock Embedding the ulam metric into \emph{l}\({}_{\mbox{1}}\).
\newblock {\em Theory of Computing}, 2(11):207--224, 2006.

\bibitem[CLRS09]{cormen2009introduction}
Thomas~H Cormen, Charles~E Leiserson, Ronald~L Rivest, and Clifford Stein.
\newblock {\em Introduction to algorithms}.
\newblock MIT press, 2009.

\bibitem[GKS19]{GKS19}
Elazar Goldenberg, Robert Krauthgamer, and Barna Saha.
\newblock Sublinear algorithms for gap edit distance.
\newblock In {\em 2019 IEEE 60th Annual Symposium on Foundations of Computer
  Science (FOCS)}, pages 1101--1120. IEEE, 2019.

\bibitem[GRS20]{GRS20}
Elazar Goldenberg, Aviad Rubinstein, and Barna Saha.
\newblock Does preprocessing help in fast sequence comparisons?
\newblock In Konstantin Makarychev, Yury Makarychev, Madhur Tulsiani, Gautam
  Kamath, and Julia Chuzhoy, editors, {\em Proccedings of the 52nd Annual {ACM}
  {SIGACT} Symposium on Theory of Computing, {STOC} 2020, Chicago, IL, USA,
  June 22-26, 2020}, pages 657--670. {ACM}, 2020.

\bibitem[Hae19]{H19}
Bernhard Haeupler.
\newblock Optimal document exchange and new codes for insertions and deletions.
\newblock In {\em 2019 IEEE 60th Annual Symposium on Foundations of Computer
  Science (FOCS)}, pages 334--347. IEEE, 2019.

\bibitem[HRS19]{HRS19}
Bernhard Haeupler, Aviad Rubinstein, and Amirbehshad Shahrasbi.
\newblock Near-linear time insertion-deletion codes and
  (1+\emph{{\(\epsilon\)}})-approximating edit distance via indexing.
\newblock In {\em Proceedings of the 51st Annual {ACM} {SIGACT} Symposium on
  Theory of Computing, {STOC} 2019, Phoenix, AZ, USA, June 23-26, 2019.}, pages
  697--708, 2019.

\bibitem[HSSS19]{HSSS19}
MohammadTaghi Hajiaghayi, Masoud Seddighin, Saeed Seddighin, and Xiaorui Sun.
\newblock Approximating {LCS} in linear time: Beating the {\(\surd\)}n barrier.
\newblock In {\em Proceedings of the Thirtieth Annual {ACM-SIAM} Symposium on
  Discrete Algorithms, {SODA} 2019, San Diego, California, USA, January 6-9,
  2019}, pages 1181--1200, 2019.

\bibitem[KS20a]{KS20sublinear}
Tomasz Kociumaka and Barna Saha.
\newblock Sublinear-time algorithms for computing \& embedding gap edit
  distance.
\newblock In {\em FOCS}, 2020.
\newblock To appear.

\bibitem[KS20b]{KS20}
Michal Kouck{\'{y}} and Michael~E. Saks.
\newblock Constant factor approximations to edit distance on far input pairs in
  nearly linear time.
\newblock In Konstantin Makarychev, Yury Makarychev, Madhur Tulsiani, Gautam
  Kamath, and Julia Chuzhoy, editors, {\em Proccedings of the 52nd Annual {ACM}
  {SIGACT} Symposium on Theory of Computing, {STOC} 2020, Chicago, IL, USA,
  June 22-26, 2020}, pages 699--712. {ACM}, 2020.

\bibitem[Kus19]{Kuszmaul19}
William Kuszmaul.
\newblock Efficiently approximating edit distance between pseudorandom strings.
\newblock In Timothy~M. Chan, editor, {\em Proceedings of the Thirtieth Annual
  {ACM-SIAM} Symposium on Discrete Algorithms, {SODA} 2019, San Diego,
  California, USA, January 6-9, 2019}, pages 1165--1180. {SIAM}, 2019.

\bibitem[LMS98]{LMS98}
Gad~M. Landau, Eugene~W. Myers, and Jeanette~P. Schmidt.
\newblock Incremental string comparison.
\newblock {\em {SIAM} J. Comput.}, 27(2):557--582, 1998.

\bibitem[Mye86]{Meyers86}
Eugene~W. Myers.
\newblock An {O(ND)} difference algorithm and its variations.
\newblock {\em Algorithmica}, 1(2):251--266, 1986.

\bibitem[NRRS19]{NRRS19}
Ilan Newman, Yuri Rabinovich, Deepak Rajendraprasad, and Christian Sohler.
\newblock Testing for forbidden order patterns in an array.
\newblock {\em Random Struct. Algorithms}, 55(2):402--426, 2019.

\bibitem[OR07]{OR07-edit}
Rafail Ostrovsky and Yuval Rabani.
\newblock Low distortion embeddings for edit distance.
\newblock {\em J. {ACM}}, 54(5):23, 2007.

\bibitem[RS20]{RS20}
Aviad Rubinstein and Zhao Song.
\newblock Reducing approximate longest common subsequence to approximate edit
  distance.
\newblock In Shuchi Chawla, editor, {\em Proceedings of the 2020 {ACM-SIAM}
  Symposium on Discrete Algorithms, {SODA} 2020, Salt Lake City, UT, USA,
  January 5-8, 2020}, pages 1591--1600. {SIAM}, 2020.

\bibitem[RSSS19]{RSSS19}
Aviad Rubinstein, Saeed Seddighin, Zhao Song, and Xiaorui Sun.
\newblock Approximation algorithms for {LCS} and {LIS} with truly improved
  running times.
\newblock In David Zuckerman, editor, {\em 60th {IEEE} Annual Symposium on
  Foundations of Computer Science, {FOCS} 2019, Baltimore, Maryland, USA,
  November 9-12, 2019}, pages 1121--1145. {IEEE} Computer Society, 2019.

\bibitem[SS17]{SS17}
Michael~E. Saks and C.~Seshadhri.
\newblock Estimating the longest increasing sequence in polylogarithmic time.
\newblock {\em {SIAM} J. Comput.}, 46(2):774--823, 2017.

\bibitem[Ukk85]{Ukkonen85}
Esko Ukkonen.
\newblock Algorithms for approximate string matching.
\newblock {\em Information and Control}, 64(1-3):100--118, 1985.

\end{thebibliography}

\appendix

\section{Trading off running time for better approximation}\label{app:k-vs-k1+eps}

In this appendix, we show how to extend the results of the main body to distinguishing edit distance $k$ vs. $k\ell$. %

\subsection{Preliminaries: the $h$-wave algorithm}

The works of \cite{LMS98,GRS20} show that if one preprocesses both $A$ and $B$, then the \emph{exact} edit distance between $A$ and $B$ can be found by the following $O(k^2)$-sized dynamic program (called an \emph{h-wave}). The DP state is represented by a table $h[i,j]$, where $i \in [0, k]$ and $j \in [-k, k]$, initialized with $h[0,0] = 0$  and $h[0, j] = -\infty$ %
for all $j \in [-k,k]\setminus\{0\}$. The transitions for $i \ge 1$ are
\[
  h[i,j] = \max\begin{cases}
    h[i-1,j-1] + 1\\
    h[i-1,j] + \max(d, 1)\\
    h[i-1,j+1]
  \end{cases}  
\]
where $d$ is maximal such that $A[h[i-1,j]+1,h[i-1,j]+d] = B[h[i-1,j]+j+1,h[i-1,j]+j+d]$. Intuitively, $h[i,j]$ is the farthest length $n'$ such that $A[1,n']$ and $B[1,n'-j]$ have edit distance at most $i$. Then, $\ED(A, B) \le k$ if and only if $h[k,0] = n$.

\subsection{Approximate $h$-wave and $\MaxShiftAlign_{\ell}$}

We speed-up the original $h$-wave algorithm by considering a sparsified $h$-wave, where we store $h[i,j]$ with $i \in [0, k]$ and $j \in [-k, k]$ such that $j$ is a multiple of $\ell$. We again initialize $h[0, j] = 0$ for all $j$, but now we have the following transitions.

\[
  h[i,j] = \max\begin{cases}
    h[i-1,j-\ell] + \ell & \text{if\;} j-\ell \ge -k\\
    h[i-1,j] + \max(d, \ell)\\
    h[i-1,j+\ell] + \ell & \text{if\;} j+\ell \le k
  \end{cases}  
\]
where $d$ is maximal such that $A[h[i-1,j]+1+a,h[i-1,j]+j+a] = B[h[i-1,j]+j+1,h[i-1,j]+j+d]$ 
for some $a \in [-\ell, \ell]$. Note that when $\ell=1$ this mostly aligns with the $h$-wave algorithm except we have $h[i-1,j+\ell] + \ell$ instead of $h[i-1,j+\ell]$ (as we are only seeking an approximation, we do this to make the analysis simpler).

For the approximate $h$-wave, it is approximately true that $h[i,j]$ is the farthest length $n'$ such that $A[1,n']$ and $B[1,n'-j]$ have edit distance at most $\tilde{O}(i\ell)$ (see Lemma~\ref{lem:jaggedwave-pseudo1} and \ref{lem:jaggedwave-pseudo2}) for more details). Then, if $\ED(A, B) \le \tilde{O}(k)$, we have $h[k,0] \geq n$; and if $h[k, 0] < n$ then  $\ED(A, B) \ge \tilde{\Omega}(k\ell).$

Note that unlike the main body, we are no longer checking for matches where $A$ and $B$ have a common start point. Instead we require a generalization of $\MaxAlign_k$, which we call $\MaxShiftAlign_{\ell,k}(A, B, i_A, i_B)$. This algorithm finds the greatest positive integer $d$ such that $A[i_A+c, i_A+d-1 + c] = B[i_B, i_B+d-1]$ for some $c \in [-\ell, \ell]$, given the promise that $|i_A - i_B| \le k$.

\begin{definition}[shifted $(i_A, \ell)$-alignment and approximate shifted $(i_A,\ell)$-alignment] \hfill

Given strings $A,B$, and $i_A, i_B \in [1, n]$  we say that a $B[i_B,i_B+d-1]$ has a {\em shifted-$(i_A,\ell)$-alignment} with $A$ if there is $i$ with $|i_A-i|\le \ell$ and $A[i, i+d-1] = B[i_B,i_B+d-1]$. If instead we have that\footnote{Note that a shifted-$(i_A,\ell)$-alignment implies an approximate shifted-$(i_A,k)$-alignment, because $\ED(A[i_A, i_A+d-1], [i, i+d-1]) \le 2\ell$ if $|i-i_A| \le \ell$.} $\ED(A[i_A, i_A+d-1], B[i_B, i_B+d-1]) \le 10\ell$, we say that $B[i_B, i_B+d-1]$ has an {\em approximate shifted-$(i_A,\ell)$-alignment} with $A.$
\end{definition}

We say that an implementation of $\MaxShiftAlign_{\ell,k}(A, B, i_A, i_B)$ is \emph{approximately correct} if whenever $|i_A-i_B| \le k$ the following are true.
\begin{enumerate}
\item Let $d'$ be the maximal $d'$ such that $B[i_B, i_B+d'-1] = A[i_A, i_A+d'-1]$  for. With probability $1$, $\MaxShiftAlign_{\ell,k}(A, B, i_A, i_B) > d'/2$ (unless $d'=0$).
\item With probability at least $1-1/n^2$, $B[i_B, i_B+ \MaxShiftAlign_{\ell,k}(A, B, i_A, i_B)-1]$ has an approximate shifted-$(i_A, \ell)$-alignment in $A$.
\end{enumerate}

\begin{algorithm}[h]
\caption{$\method{GreedyWave}(A, B, k, \ell)$}
  \begin{algorithmic}
    \INPUT $A, B \in \Sigma^n$, $\ell \le k \le n$
    \OUTPUT SMALL if $\ED(A, B) \le \tilde{O}(k)$ or LARGE if $\ED(A, B) \ge \tilde{\Omega}(k \ell)$
    \STATE $h \leftarrow$ matrix with indices $[0, k] \times ([-k, k] \cap \ell \mathbb Z)$
    \STATE \textbf{for} $i$ \textbf{from} $0$ to $k$
    \STATE \ \ \ \textbf{for} $j$ multiples of $\ell$ \textbf{from} $-k$ to $k$
    \STATE \ \ \ \ \ \ \textbf{if} i = 0 \textbf{then} $h[i,j] \leftarrow -\infty$, $h[0,0]\leftarrow 0$.
    \STATE \ \ \ \ \ \ \textbf{else}
    \STATE \ \ \ \ \ \ \ \ \ $h[i,j] \leftarrow h[i-1,j] + \ell$
    \STATE \ \ \ \ \ \ \ \ \ \textbf{if}  $j-\ell \ge -k$ \textbf{then} $h[i,j] \leftarrow \max(h[i,j], h[i-1,j-\ell] + \ell)$
    \STATE \ \ \ \ \ \ \ \ \ \textbf{if}  $j+\ell \le k$ \textbf{then} $h[i,j] \leftarrow \max(h[i,j], h[i-1,j+\ell] + \ell)$
    \STATE \ \ \ \ \ \ \ \ \ $h[i,j] \leftarrow \max(h[i,j], h[i-1,j] + \MaxShiftAlign_{\ell,k}(A, B, h[i-1,j]+1, h[i-1,j]+j+1))$
    \STATE \textbf{if} $h[k,0] \ge n$ \textbf{return}\ \ SMALL
    \RETURN LARGE
  \end{algorithmic}
\label{alg:jaggedwave}
\end{algorithm}

\subsection{Analysis of $\method{GreedyWave}$}

We first prove that if we do not take any of the ``shortcuts'' given by $\MaxShiftAlign_{\ell,k}(A, B, h[i-1,j]+1, h[i-1,j]+j+1))$, we still increase $h$ by a quantifiable amount.

\begin{claim}\label{claim:jump}
  Consider $(i,j), (i',j') \in [0, k] \times ([-k, k] \cap \ell \mathbb Z)$ such that $i' \ge i$ and
  \[
    |j' - j| \le \ell(i' - i),
  \]
  then,
  \[
    h[i', j'] \ge h[i, j] + \ell(i' - i).
  \]
\end{claim}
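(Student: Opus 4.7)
The plan is to prove this by induction on the difference $\Delta i := i' - i \ge 0$. Intuitively, the statement says that each unit increment in the first coordinate of the DP table allows the $h$-value to grow by at least $\ell$, even while ``moving sideways'' in the second coordinate by $\pm\ell$. This is exactly what the three non-$\MaxShiftAlign$ transitions guarantee: $h[i,j] \ge h[i-1,j'']+\ell$ for $j'' \in \{j-\ell, j, j+\ell\}$, each whenever the neighbor lies within $[-k,k]$.

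The base case $\Delta i = 0$ is immediate since $|j'-j|\le 0$ forces $j'=j$. For the inductive step with $\Delta i \ge 1$, I will identify a ``previous'' column $j'' \in \{j'-\ell, j', j'+\ell\}$ with $|j''-j|\le \ell(\Delta i-1)$ that is still in $[-k,k]$, and then apply the induction hypothesis to the pair $(i'-1, j'')$ versus $(i, j)$. The key observation is that $\delta := j'-j$ is a multiple of $\ell$ (both $j,j'\in \ell\mathbb Z$), so we split into three cases: (i) if $\delta = 0$, take $j'' = j' = j$, giving $|j''-j| = 0 \le \ell(\Delta i - 1)$ since $\Delta i\ge 1$; (ii) if $\delta \ge \ell$, take $j'' = j'-\ell$, giving $|j''-j| = \delta - \ell \le \ell\Delta i - \ell = \ell(\Delta i - 1)$; (iii) if $\delta \le -\ell$, take $j'' = j'+\ell$ symmetrically.

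I also need to verify that in each case $j''\in [-k,k]\cap \ell\mathbb Z$ so the relevant transition from the algorithm is actually available. In case (ii), $j' > j \ge -k$ and both are multiples of $\ell$, so $j' \ge j+\ell$, hence $j'' = j' - \ell \ge j \ge -k$; in case (iii) we get $j'' = j'+\ell \le j \le k$ by the symmetric argument; case (i) is trivial. Membership in $\ell\mathbb Z$ is automatic. Once $j''$ is selected, the algorithm guarantees $h[i',j']\ge h[i'-1,j''] + \ell$, and the induction hypothesis gives $h[i'-1,j'']\ge h[i,j] + \ell(\Delta i - 1)$; adding $\ell$ finishes the step.

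The only subtlety is the arithmetic in the case analysis, ensuring the gap $|j''-j|$ drops by exactly $\ell$ whenever we move toward $j$ and doesn't blow up at the boundary; this is where using that $j, j' \in \ell\mathbb Z$ (so $\delta \in \ell\mathbb Z$ and the jump $\{\delta=0\}\cup\{|\delta|\ge \ell\}$ is dichotomous) is essential. There is no real obstacle beyond this bookkeeping — the statement is a purely combinatorial monotonicity property of the approximate $h$-wave recursion, and it is precisely the reason the sparsified DP still propagates progress at rate $\ell$ per row.
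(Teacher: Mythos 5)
Your proof is correct and follows essentially the same route as the paper's: induction on $i'-i$, with each inductive step moving one row down to a column $j''\in\{j'-\ell,j',j'+\ell\}$ chosen toward $j$, then invoking the transition $h[i',j']\ge h[i'-1,j'']+\ell$. Your version is in fact slightly more explicit than the paper's, which simply asserts the existence of a suitable $e\in\{-1,0,1\}$ with $j'+e\ell$ "between $j'$ and $j$"; your case analysis on $\delta=j'-j$ and the check that $j''$ remains in $[-k,k]$ (so the transition is actually available) make the same argument airtight.
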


\begin{proof}
  We prove this by induction on $i' - i$. The base case of $i'-i = 0$ is immediate.

  Now assume $i'-i \ge 1$ and that $|j' - j| \le \ell(i' - i)$. Note then there exists $e \in \{-1, 0, 1\}$ such that $j'+e\ell$ is between $j'$ and $j$ and
  \[
    |(j'+e\ell) - j| \le \ell(i'-1-i).
  \]
  By the induction hypothesis, we know then that
  \[
    h[i'-1, j'+e\ell] \ge h[i, j] +\ell(i'-i-1).
  \]
  Note then from $\method{GreedyWave}$, since $i' \ge 1$, we have that
  \[
    h[i', j'] \ge h[i'-1, j'+e\ell] + \ell.
  \]
  Therefore, combining the previous two inequalities.
  \[
    h[i',j'] \ge h[i, j] + \ell(i'-i).
  \]
\end{proof}

Correctness of Algorithm~\ref{alg:jaggedwave} is proved in the following pair of lemmas.

\begin{lemma}\label{lem:jaggedwave-pseudo1}

  Assume that $\MaxShiftAlign_{\ell,k}$ 
  is approximately correct. If $\ED(A, B) \le k/(20\lceil \log n\rceil )$, then $\method{GreedyWave}(A, B, k, \ell)$ outputs SMALL with probability $1$.
\end{lemma}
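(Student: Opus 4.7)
The plan is to use Lemma~\ref{lem:decomp} together with Claim~\ref{claim:jump} to exhibit an explicit path through the $h$-wave DP that drives $h[k,0] \ge n$. Set $k' := \ED(A,B) \le k/(20\lceil \log n\rceil)$ and apply Lemma~\ref{lem:decomp} to obtain partitions of $A$ and $B$ into at most $2k'+1$ intervals each with a monotone partial matching; unmatched intervals have length at most $1$, while matched pairs are equal substrings whose starting offsets differ by shifts $s_1, \ldots, s_m \in [-k', k']$. For each $r$ let $\hat s_r \in \ell \mathbb{Z}$ be $s_r$ rounded to the nearest multiple of $\ell$, so $|\hat s_r - s_r| \le \ell/2$ and $|\hat s_r| \le k' + \ell/2 \le k$.

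I will prove by induction on $r$ that there exists some $i_r$ with $h[i_r, \hat s_r] \ge c_r^A$, where $c_r^A$ is the $A$-coordinate at the end of the $r$-th matched pair. For the inductive step I split the work into two phases: (1) a shift-adjustment phase of $\lceil|\hat s_r - \hat s_{r-1}|/\ell\rceil$ applications of the diagonal transitions $h[i-1,j\pm\ell]+\ell$ that moves $j$ from $\hat s_{r-1}$ to $\hat s_r$ while simultaneously advancing $A$ by $\ell$ per step, thereby absorbing the unmatched intervals between pairs $r-1$ and $r$; and (2) a matched-traversal phase that repeatedly applies the transition $h[i-1, \hat s_r] + \max(d, \ell)$ with $d := \MaxShiftAlign_{\ell,k}(A,B,h+1,h+\hat s_r+1)$. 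In phase~(2), because the true shift $s_r$ lies within $\ell/2 \le \ell$ of $\hat s_r$, any remaining tail of the $r$-th matched pair is a valid shifted alignment of length at least that tail, so half approximate correctness gives $d > L_{\text{rem}}/2$ per call, and $\lceil \log_2 n\rceil$ iterations of phase~(2) suffice to finish the pair.

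For the iteration budget, the matched-traversal phases contribute at most $m\lceil \log_2 n\rceil \le (2k'+1)\lceil \log_2 n\rceil \le k/10$ iterations in total. For the shift-adjustment phases, the key observation is that $s_r - s_{r-1}$ equals the signed difference of the unmatched $A$- and $B$-lengths sitting between pairs $r-1$ and $r$, so $\sum_r |s_r - s_{r-1}|$ is at most the total unmatched length across both strings, which is $O(k')$; combined with a per-transition rounding slack of $\ell/2$ taken $O(k')$ times, this totals $O(k'/\ell) + O(k') = O(k')$ iterations. A further $O(k')$ iterations cover any trailing unmatched suffix of $A$. Thus after at most $i^* \le k/5$ iterations we have $h[i^*, \hat s_m] \ge n$, and applying Claim~\ref{claim:jump} with $|0-\hat s_m| \le k'+\ell/2 \le \ell(k-i^*)$ yields $h[k,0] \ge h[i^*,\hat s_m] + \ell(k-i^*) \ge n$, so $\method{GreedyWave}$ outputs SMALL.

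The main obstacle I anticipate is bookkeeping around the $\ell$-granularity of the diagonal transitions: a single shift-adjustment step advances $A$ by exactly $\ell$, so when the unmatched material we want to skip has length only $1$ the step overshoots by up to $\ell-1$, biting into the next matched pair. I plan to argue that this overshoot is harmless — it only shortens the matched pair as seen by $\MaxShiftAlign_{\ell,k}$, whose maximum-alignment length is still at least the shortened remainder and so is still halved per call — and that the rounding $s_r \to \hat s_r$ interacts cleanly with the $\pm \ell$ slack of $\MaxShiftAlign_{\ell,k}$ so that the inductive invariant $h[i_r,\hat s_r] \ge c_r^A$ is preserved throughout.
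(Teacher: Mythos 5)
Your proposal follows essentially the same route as the paper's proof: apply Lemma~\ref{lem:decomp}, round each matched pair's shift to the nearest multiple of $\ell$, use Claim~\ref{claim:jump} (the $\pm\ell$ diagonal transitions) to cross the length-$\le 1$ unmatched gaps, spend $\lceil\log n\rceil$ halving calls to $\MaxShiftAlign_{\ell,k}$ per matched pair, and verify a total row budget of $O(k'\log n)\le k$. One small bookkeeping point to tighten: when $\hat s_r=\hat s_{r-1}$ your shift-adjustment phase uses zero DP rows even though unmatched characters may sit between the two pairs, so you need at least one stay-on-diagonal transition $h[i,j]\ge h[i-1,j]+\ell$ per gap --- this costs only $O(k')$ extra rows, well within your slack, and corresponds to the explicit ``$+1$'' in the row index of the paper's Step~1.
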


\begin{lemma}\label{lem:jaggedwave-pseudo2}

 Assume that $\MaxShiftAlign_{\ell,k}$ 
 is approximately correct. If $\ED(A, B) > 10 k\ell$, then $\method{GreedyWave}(A, B, k, \ell)$ 
 outputs LARGE with probability at least $1-1/n$.

\end{lemma}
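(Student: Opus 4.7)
The plan is to prove the contrapositive: if $\method{GreedyWave}(A,B,k,\ell)$ outputs SMALL, i.e.\ $h[k,0] \ge n$, then $\ED(A,B) \le 10k\ell$. First, condition on the event $\mathcal{E}$ that every call to $\MaxShiftAlign_{\ell,k}$ made during the algorithm satisfies property~(2) of approximate correctness (its output $d$ gives an approximate shifted-$(i_A,\ell)$-alignment, and in particular $\ED(A[i_A,i_A+d-1], B[i_B,i_B+d-1]) \le 10\ell$). The algorithm performs at most $O(k \cdot k/\ell) = O(k^2/\ell) \le O(n^2)$ such calls, so by a union bound $\Pr[\mathcal{E}] \ge 1 - 1/n$.

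Assuming $\mathcal{E}$, I construct an alignment of $A$ with $B$ of cost $\le 10k\ell$ by tracing back through the DP. Let $(i_t,j_t)$ for $t=k,k-1,\dots,0$ be the backtrack path from $(k,0)$ to $(0,0)$, where $(i_{t-1},j_{t-1})$ is the predecessor that realized the maximum defining $h[i_t,j_t]$. Then $i_t = t$ and $j_t - j_{t-1} \in \{-\ell,0,\ell\}$. Write $n_t := h[i_t,j_t]$, and pad $A,B$ with $\$$ past position $n$ so that all prefixes are well-defined. I will prove by induction on $t$ that there is an alignment of $A[1,n_t]$ with $B[1,n_t - j_t]$ of cost at most $10\ell \cdot t$.

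The base case $t=0$ is trivial since $n_0 = 0$. For the inductive step, case on the transition that produced $h[i_t,j_t]$. In the ``diagonal $-\ell$'' case ($j_{t-1} = j_t - \ell$), the $B$-prefix length $n_t - j_t = n_{t-1} - j_{t-1}$ is unchanged while $A$ grows by $\ell$, so delete the new $\ell$ $A$-characters at extra cost $\ell$. In the ``diagonal $+\ell$'' case ($j_{t-1} = j_t + \ell$), $A$ grows by $\ell$ and $B$ by $2\ell$, so align the two new suffixes for extra cost at most $2\ell$. In the ``default'' case ($j_{t-1}=j_t$, $n_t = n_{t-1}+\ell$), both suffixes grow by $\ell$ and can be matched with at most $\ell$ substitutions. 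In the $\MaxShiftAlign$ case ($j_{t-1}=j_t$, $n_t = n_{t-1} + d$), $\mathcal{E}$ gives $\ED(A[n_{t-1}+1, n_{t-1}+d], B[n_{t-1}+j_{t-1}+1, n_{t-1}+j_{t-1}+d]) \le 10\ell$, i.e.\ extra cost $\le 10\ell$. All four cases incur at most $10\ell$ additional edit operations, completing the induction.

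At $t=k$ we obtain an alignment of $A[1,n_k]$ with $B[1,n_k]$ (with $\$$-padding, since $n_k \ge n$) of cost $\le 10k\ell$; restricting to positions $\le n$ on both sides pairs up the $\$$'s for free, so $\ED(A,B) \le 10k\ell$, contradicting the hypothesis $\ED(A,B) > 10k\ell$. The main obstacle is the bookkeeping of $A$- and $B$-suffix lengths across the four transition types, particularly making sure that the $\MaxShiftAlign$-case invocation's endpoints line up with the DP coordinates $(i_{t-1},j_{t-1})$ and that padding past index $n$ does not introduce hidden costs.
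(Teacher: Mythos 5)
Your proof is correct and takes essentially the same route as the paper's: an induction over the DP transitions (the paper proves the invariant $\ED(A[1,h_{i,j}],B[1,h_{i,j}+j])\le 10i\ell+|j|$ for every reachable cell, while you trace back only the path to $(k,0)$ with the uniform bound $10\ell$ per step — both give $10k\ell$ at $h[k,0]\ge n$), with the same four-case analysis of the transitions. The one quibble is your union bound: ``$O(n^2)$ calls each failing with probability $1/n^2$'' does not by itself give failure probability $1/n$; you need the observation that in the relevant regime ($\ell\ge\sqrt k$ and $k\ell\le n$) the number of $\MaxShiftAlign_{\ell,k}$ calls is $O(k^2/\ell)=O(n)$.
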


\begin{proof}[Proof of Lemma~\ref{lem:jaggedwave-pseudo1}]
  \subsubsection*{Notation and Inductive Hypothesis}  Let $k'=\lfloor k/(20\lceil \log n \rceil)\rfloor$. Assume that $\ED(A, B) \le k'$. Let $I_1^A, \hdots, I_{2k'+1}^A$ and $I_1^B, \hdots, I_{2k'+1}^B$ and $\pi : [2k'+1] \to [2k'+1]\cup \{\perp\}$ be as in Lemma~\ref{lem:decomp}.
  Let $(a_1, b_1), \hdots, (a_t, b_t) \in \pi$ be the matching, ordered such that $a_1 \le \cdots \le a_t$ and $b_1 \le \cdots \le b_t$. Let $A_i = \max I^A_{a_i}$ and $B_i = \max I^B_{b_i}$.

  For the boundary, we let $a_0, b_0 = 0$, $A_0 = B_0 = 0$. Also let $a_{t+1}, b_{t+1} = 2k'+2$ and $A_{t+1} = B_{t+1} = n+1$. Let $r : \mathbb Z \to \ell\mathbb Z$ be the function which rounds each integer to the nearest multiple of $\ell$ (breaking ties by rounding down).
  
  It suffices to prove by induction for all $i \in \{0, 1, \hdots, t+1\}$, we have that \[
   h[a_i+b_i+(\lceil \log n\rceil + 1) i, r(A_i -B_i)] \ge A_i.
    \]

    The base case of $i = 0$ follows from $h[0,0] = 0$ in the initialization. Assume now that
    \[
 \textbf{Inductive hypothesis. } h[a_i+b_i+(\lceil \log n\rceil + 1)i, r(A_i -B_i)] \ge A_i.\]

    We seek to show that \[h[a_{i+1}+b_{i+1}+(\lceil \log n + 1\rceil)(i+1), r(A_{i+1} -B_{i+1})] \ge A_{i+1}.\]

  We complete the induction in two steps.
  
    \subsubsection*{Step 1, $h[a_{i+1}+b_{i+1}+(\lceil \log n\rceil +1)i+1, r(A_{i+1} -B_{i+1})] \ge A_i + a_{i+1}-a_i+1.$}

  First note that
  \begin{align*}
    |(A_{i+1}-B_{i+1}) -(A_i-B_i)| &= |(A_{i+1}-A_i-|I^A_{a_{i+1}}|)-(B_{i+1}-B_i-|I^B_{b_{i+1}}|)|\\
                                   &\le \left|A_{i+1}-A_i-|I^A_{a_{i+1}}|\right| + \left|B_{i+1}-B_i - |I^B_{b_{i+1}}| \right|\\
    &\le (a_{i+1}-a_i) + (b_{i+1}-b_i).
  \end{align*}

  Therefore,
  \begin{align*}
    |r(A_{i+1} -B_{i+1}) - r(A_i -B_i)| &\le (a_{i+1}-a_i) + (b_{i+1}-b_i) + \ell\\
                                        &\le \ell[(a_{i+1}-a_i) + (b_{i+1}-b_i) + 1]\\
                                        &= \ell[(a_{i+1}+b_{i+1}+(\lceil \log n\rceil +1)i+1) - (a_{i}+b_{i}+(\lceil \log n\rceil +1)i)].
  \end{align*}
  Therefore, we may apply Claim~\ref{claim:jump} to get that.
  \begin{align*}
    &\!\!\!\!\!\!h[a_{i+1}+b_{i+1}+(\lceil \log n\rceil +1)i+1, r(A_{i+1} -B_{i+1})]\\
    &\ge h[a_{i}+b_{i}+(\lceil \log n\rceil +1)i, r(A_{i} -B_{i})]\\
    &\ \ \ \ \ \ \ + \ell[(a_{i+1}+b_{i+1}+(\lceil \log n\rceil +1)i+1) - (a_{i}+b_{i}+(\lceil \log n\rceil +1)i)]\\
    &= A_i + \ell((a_{i+1}-a_i) + (b_{i+1}-b_i) + 1) \text{ (induction hypothesis)}\\
    &\ge A_i + a_{i+1}-a_i+1.
  \end{align*}

  \subsubsection*{Step 2, $h[a_{i+1}+b_{i+1}+(\lceil \log n\rceil +1)(i+1), r(A_{i+1} -B_{i+1})] \ge A_{i+1}. $}
  
  For $j \in [\lceil \log n + 1\rceil]$, let \[\hat{h}_{j} := h[a_{i+1}+b_{i+1}+(\lceil \log n\rceil)i+j, r(A_{i+1} -B_{i+1})].\]  If $\hat{h}_{j} \ge A_{i+1}$ for some $j \in [\lceil \log n \rceil]$, then we know that \[h[a_{i+1}+b_{i+1}+(\lceil \log n\rceil + 1)(i+1), r(A_{i+1} -B_{i+1})] \ge A_{i+1},\] which finishes the inductive step.
  
  Otherwise, we know that $\hat{h}_{j} \in [A_i + a_{i+1}-a_i+1, A_{i+1})$ for all $j \in [\lceil \log n \rceil]$. If $i = t$, then $A_t + a_{t+1} - a_t + 1 \ge n + 1 =  A_{t+1}$, because each interval strictly between $a_{t}$ to $a_{t+1}$ has length at most $1$. Therefore, $\hat{h}_{j} \ge A_{t+1}$, so we are done in this case.

  Now assume $i < t$ and consider any $j \in [\log n]$. Since every interval between $I^A_{a_i}$ and $I^A_{a_{i+1}}$ has length at most $1$, we have that $A_{i+1}-|I^A_{a_{i+1}}| + 1 \le A_i + a_{i+1}-a_i+1$. Therefore, $\hat{h}_{j} \in [A_{i+1} - |I^A_{a_{i+1}}| +1, A_i).$  Therefore, $A[\hat{h}_{j}, A_{i+1}] = B[h'-A_{i+1}+B_{i+1}, B_{i+1}]$. By definition of $r$, $|r(A_{i+1}-B_{i+1}) - A_{i+1}+B_{i+1}| \le \ell$. Since $\MaxShiftAlign_{\ell,k}$ is approximately correct, \[\MaxShiftAlign_{\ell,k}(A, B, \hat{h}_{j}, \hat{h}_{j}+r(A_{i+1}-B_{i+1})) > (A_{i+1} - \hat{h}_{j})/2.\] Therefore, \[\hat{h}_{j+1} \ge \hat{h}_{j} + d > \frac{A_{i+1}+\hat{h}_{j}}{2}.\] By composing these inequalities, we have that \[\hat{h}_{j+1} > \frac{(2^{j}-1)A_{i+1} + \hat{h}_{1}}{2^{j}}.\] Plugging in $j = \lceil \log n \rceil$, we get that \[\hat{h}_{\lceil \log n\rceil+1} > \frac{(n-1)}{n}A_{i+1} \ge A_{i+1}-1,\] so \[\hat{h}_{\lceil \log n \rceil + 1} = h[a_{i+1}+b_{i+1}+(\lceil \log n\rceil +1)(i+1), r(A_{i+1} -B_{i+1})] \ge A_{i+1},\] as desired.

  \subsubsection*{Conclusion.}
  Therefore, we have that $h[a_{t+2}+b_{t+2}+(\lceil \log n\rceil +1)(t+1), 0] \ge n+1$. Thus, we report SMALL as long as $a_{t+2} +b_{t+2} +(\lceil \log n\rceil +1)(t+1) \le k$. Observe that
  \begin{align*}
    a_{t+2}+b_{t+2}+(\lceil \log n\rceil +1)(t+1) &\le 2k'+2+2k'+2+(\lceil \log n \rceil + 1)(2k'+2)\\
                                                  &\le 20\lceil \log n\rceil k' \le k.
  \end{align*} Therefore, our algorithm always reports SMALL when $\ED(A, B) \le k/(20\lceil \log n\rceil).$ 
\end{proof}

\begin{proof}[Proof of Lemma~\ref{lem:jaggedwave-pseudo2}]
    Assume that $\method{GreedyWave}(A, B, k, \ell)$ output SMALL, and that $\MaxShiftAlign_{\ell,k}$ never failed at being approximately correct. For succinctness, we let $h_{i,j}$ be shorthand for $h[i,j]$

  We prove by induction that for all $i$ and $j$ for which $h_{i,j} \neq -\infty$, $\ED(A[1,h_{i,j}], B[1,h_{i,j}+j\ell]) \le 10i\ell + |j|$. This holds for the base case $i=j=0$. Now, we break into cases depending on how $h_{i,j}$ was computed.

  \subsubsection*{Case 1, $h_{i,j} = h_{i-1,j}+\ell$.}
  
  If $h_{i,j} = h_{i-1,j}+\ell$, then observe that
  \begin{align*}
    \ED(A[1,h_{i,j}], B[1,h_{i,j}+j]) &\le 2\ell + \ED(A[1,h_{i-1,j}], B[1,h_{i-1,j} +j])\\
                                    &\le 2\ell + 10(i-1)\ell+|j|\\
                                    &< 10i\ell + |j|.
  \end{align*}
  \subsubsection*{Case 2, $h_{i,j} = h_{i-1,j-\ell}+\ell$.}
  If $h_{i,j} = h_{i-1,j-\ell}+\ell$, then observe that
  \begin{align*}
    \ED(A[1,h_{i,j}], B[1,h_{i,j}+j]) &\le \ED(A[1,h_{i-1,j-\ell}+\ell], B[1,h_{i-1,j-\ell} +j+\ell])\\
                                          &\le 3\ell + \ED(A[1,h_{i-1,j-\ell}], B[1,h_{i-1,j-\ell} +j-\ell])\\
    &\le 3\ell + 10(i-1)\ell+|j-\ell| < 10i\ell+|j|.
  \end{align*}
    \subsubsection*{Case 3, $h_{i,j} = h_{i-1,j+\ell}+\ell$.}
   If $h_{i,j} = h_{i-1,j+\ell}+\ell$, then observe that
  \begin{align*}
    \ED(A[1,h_{i,j}], B[1,h_{i,j}+j\ell]) &\le \ED(A[1,h_{i-1,j+\ell}+\ell], B[1,h_{i-1,j+\ell} +j+\ell])\\
                                          &\le \ell + \ED(A[1,h_{i-1,j+\ell}], B[1,h_{i-1,j+\ell} +j+\ell])\\
    &\le \ell + 10(i-1)\ell+|j+\ell| < 10i\ell+|j|).
  \end{align*}
  \subsubsection*{Case 4, $h_{i,j} = h_{i-1,j} + d$}
  Finally, if $h_{i,j} = h_{i-1,j} + d$, then \[\ED(A[h_{i-1,j}+1, h_{i-1,j}+d], B[h_{i-1,j}+j+1, h_{i-1,j}+j+d]) \le 10\ell\] since $\MaxShiftAlign_{\ell,k}$ is approximately correct. Thus,
  \begin{align*}
    \ED(A[1,h_{i,j}], B[1,h_{i,j}+j]) &\le \ED(A[1,h_{i-1,j}], B[1,h_{i-1,j+\ell} +j])\\&+ \ED(A[h_{i-1,j}+1, h_{i-1,j}+d],\\&\ \ \ \ \  B[h_{i-1,j}+j+1, h_{i-1,j}+j+d])\\
                                          &\le 10(i-1)\ell+|j| + 10\ell < 10i\ell+|j|.
  \end{align*}

  This completes the induction. Therefore, since  $\method{GreedyWave}(A, B, k, \ell)$ output SMALL, we have that $h[k,0] \ge n$. Thus, $\ED(A, B) \le 10k\ell$, as desired.

\end{proof}

\subsection{Implementing $\MaxShiftAlign_{\ell,k}$}

\subsubsection{$\MaxShiftAlign_{\ell,k}$ with No Preprocessing}

We use a nearly-identical algorithm to that of Section~\ref{sec:zero-sided}, including looking at $\sqrt{k}$ shifts for both $A$ and $B$. But, we now sample $S \subset [1,n]$ so that each element is included with probability at least $\min(4\ln n/\ell, 1)$ (instead of $\min(4\ln n/k, 1)$).

\begin{algorithm}[h]
\caption{$\method{ProcessA}_{\ell,k}(A)$}
\begin{algorithmic}
  \INPUT $A \in \Sigma^n$
    \STATE \textbf{for} $a$ \textbf{from} $-2\sqrt{k}$ \textbf{to} $2\sqrt{k}$
    \STATE \ \ \ $H_{A,a\sqrt{k}} \leftarrow \method{InitRollingHash}(A, S+a\sqrt{k})$ 
    \RETURN $\{H_{A,a\sqrt{k}} : a \in [-2\sqrt{k}, 2\sqrt{k}]\}$
  \end{algorithmic}
\end{algorithm}
\begin{algorithm}[h]
  
\caption{$\method{ProcessB}_{\ell,k}(B)$}
  \begin{algorithmic}
    \STATE \textbf{for} $b$ \textbf{from} $-\sqrt{k}$ \textbf{to} $\sqrt{k}$
    \STATE \ \ \ \ \ \ $H_{B,b} \leftarrow \method{InitRollingHash}(B, S-b)$
    \RETURN $\{H_{B,b} : b \in [-\sqrt{k}, \sqrt{k}]\}$
  \end{algorithmic}
\end{algorithm}

\begin{algorithm}[h!]
\caption{$\method{MaxShiftAlign}_{\ell,k}(A, B, i_A, i_B)$}
  \begin{algorithmic}
    \INPUT $A \in \Sigma^n, B \in \Sigma^n$, $\ell, k \le n$, $\ell \ge \sqrt{k}$, $i_A, i_B \in [1, n], |i_A-i_B|\le k$
    \STATE $d_{0} \leftarrow 2\ell$, $d_1 \leftarrow n-i_b+1$
    \STATE \textbf{while} $d_0 \neq d_1$ \textbf{do}
    \STATE \ \ \ $d_{\text{mid}} \leftarrow \lceil(d_0 + d_1) / 2\rceil$
    \STATE \ \ \ \textbf{if} $d \le 2k$ \textbf{then return} \True{}
    \STATE \ \ \ $L_A, L_B \leftarrow 0$
    \STATE \ \ \ \textbf{for} $a$ \textbf{from} $\lfloor\frac{i_A-i_B-\ell}{\sqrt{k}}\rfloor$ \textbf{to} $\lceil \frac{i_A-i_B+\ell}{\sqrt{k}}\rceil $
    \STATE \ \ \ \ \ \ $h \leftarrow \method{RetrieveRollingHash}(A, S+a\sqrt{k}, H_{A,a\sqrt{k}}, i_B+\ell+a\sqrt{k}, i_B+d_{\text{mid}}-\ell-1+a\sqrt{k})$
    \STATE \ \ \ \ \ \ append $h$ to $L_A$
    \STATE \ \ \ \textbf{for} $b$ \textbf{from} $-\sqrt{k}$ \textbf{to} $\sqrt{k}$
    \STATE \ \ \ \ \ \ $h \leftarrow \method{RetrieveRollingHash}(B, S-b, H_{B,b}, i_B+\ell-b, i_B+d_{\text{mid}}-\ell-1-b)$
    \STATE \ \ \ \ \ \ append $h$ to $L_B$
    \STATE \ \ \ sort $L_A$ and $L_B$
    \STATE \ \ \ \textbf{if} $L_A \cap L_B \neq \emptyset$
    \STATE \ \ \ \ \ \ \textbf{then} $d_0 \leftarrow d_{\text{mid}}$
    \STATE \ \ \ \ \ \ \textbf{else} $d_1 \leftarrow d_{\text{mid}}-1$.
    \RETURN $d_0$
  \end{algorithmic}
\end{algorithm}

\begin{lemma}\label{lem:two-sided-shiftalign}
  $\MaxShiftAlign_{\ell,k}$ is approximately correct.
\end{lemma}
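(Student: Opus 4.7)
The plan is to closely mirror the proof of Lemma~\ref{lem:two-sided-align}, handling the two clauses of approximate correctness separately. For clause~(1) I would show that whenever the exact no-shift match length $d'$ starting at $(i_A, i_B)$ is positive, some pair $(a,b)$ in the algorithm's loop ranges realizes this alignment and forces the hash test at $d_{\text{mid}} = d'$ to accept with probability $1$; this immediately gives output $\ge d' > d'/2$. For clause~(2) I would argue that whenever the binary search accepts at some $d_{\text{mid}} = d$, the substrings being hashed must have small Hamming (hence edit) distance with high probability, and a padding argument converts this into the desired approximate shifted-$(i_A,\ell)$-alignment of $B[i_B, i_B+d-1]$ with $A[i_A, i_A+d-1]$.

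For clause~(1), if $d' \le 2\ell$ the claim is immediate since the algorithm never returns less than $2\ell$, so assume $d' > 2\ell$. Because $|i_A - i_B| \le k$, write $i_A - i_B = a_0 \sqrt{k} + b_0$ with $a_0 = \lfloor (i_A - i_B)/\sqrt{k}\rfloor$ and $b_0 \in [0, \sqrt{k})$; then $a_0$ lies in the loop range $[\lfloor(i_A - i_B - \ell)/\sqrt{k}\rfloor, \lceil(i_A - i_B + \ell)/\sqrt{k}\rceil]$ and $b_0 \in [-\sqrt{k}, \sqrt{k}]$. With $d_{\text{mid}} = d'$ and $(a,b) = (a_0, b_0)$, the $A$-substring $A[i_B + \ell + a_0\sqrt{k}, \, i_B + d' - \ell - 1 + a_0\sqrt{k}]$ equals $A[i_A + \ell - b_0, \, i_A + d' - \ell - 1 - b_0]$, a substring of $A[i_A, i_A + d' - 1]$ (using $\ell \ge \sqrt{k} \ge b_0$), and the compared $B$-substring is similarly a substring of $B[i_B, i_B + d' - 1]$; under the exact match these occupy aligned positions, so they are character-identical and hence have matching hashes by Claim~\ref{claim:hash-random}(1). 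The binary search therefore accepts $d_{\text{mid}} = d'$, so the output is at least $d' > d'/2$.

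For clause~(2), let $d$ be the returned value. If $d = 2\ell$ then $\ED(A[i_A, i_A+d-1], B[i_B, i_B+d-1]) \le d = 2\ell$, trivially yielding an approximate shifted-$(i_A, \ell)$-alignment. Otherwise the binary search accepted with some $(a,b)$; writing $c = a\sqrt{k} + b$, $\alpha = i_B + \ell + c - b$, and $\gamma = i_B + \ell - b$, the compared substrings are $A[\alpha, \alpha + d - 2\ell - 1]$ and $B[\gamma, \gamma + d - 2\ell - 1]$. The sampling rate $\min(4\ln n/\ell, 1)$ makes Claim~\ref{claim:hash-random} apply with Hamming threshold $\ell$; a union bound over the $\tilde{O}(k)$ hash comparisons performed across all binary-search iterations shows that, with probability at least $1 - 1/n^2$, any such matching $(a,b)$ certifies Hamming distance $< \ell$ (hence edit distance $\le \ell$) between these two substrings. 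The triangle inequality then bounds $\ED(A[i_A, i_A+d-1], B[i_B, i_B+d-1])$ by $|i_A - \alpha| + |(i_A + d - 1) - (\alpha + d - 2\ell - 1)| + \ell + |\gamma - i_B| + |(\gamma + d - 2\ell - 1) - (i_B + d - 1)|$; using the constraint $|c - (i_A - i_B)| \le \ell + 2\sqrt{k}$ (which follows from the $a$-loop range) together with $|b| \le \sqrt{k} \le \ell$, each of the four padding terms is $O(\ell)$ and the total edit distance is $O(\ell)$, within the constant $10\ell$ of the definition (up to minor tightening of constants).

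The main obstacle is the shift bookkeeping in clause~(2): the algorithm's two-level shift decomposition $c = a\sqrt{k} + b$ may drift from the natural shift $i_A - i_B$ by up to $\ell + 2\sqrt{k}$, and one must verify that this drift, together with the $\ell$-length shave on both ends of the compared substrings, keeps all four padding terms within the $10\ell$ constant of the definition. Beyond this arithmetic, the argument is essentially a translation of Lemma~\ref{lem:two-sided-align}, with the Hamming threshold tightened from $k$ to $\ell$ and the $a$-loop restricted to the $O(\ell/\sqrt{k})$ relevant values around $(i_A - i_B)/\sqrt{k}$.
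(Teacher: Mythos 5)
Your proof is correct and follows essentially the same route as the paper's: clause (1) by exhibiting a pair $(a,b)$ in the loop ranges that decomposes the relevant shift so the hash test accepts deterministically at $d_{\text{mid}}=d'$, and clause (2) by converting an accepted hash match into a Hamming (hence edit) bound of $\ell$ on the shaved middle and padding back the ends while accounting for the drift of $c=a\sqrt{k}+b$ from $i_A-i_B$. Two minor caveats: in clause (1) you only treat the unshifted case $c=0$ (which is what the definition literally states), whereas the paper proves the guarantee for an arbitrary shifted $(i_A,\ell)$-alignment with $|c|\le\ell$ --- the version the $\method{GreedyWave}$ analysis actually invokes, though your argument extends verbatim by replacing $i_A$ with $i_A+c$ --- and your looser padding arithmetic in clause (2) should be tightened (as the paper does, by comparing first to the intermediate substring $A[i_B+c,\, i_B+c+d-1]$) to land under the specific constant $10\ell$.
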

\begin{proof}
  First, consider any $d \ge 1$ such that $B[i_B, i_B+d-1]$ has a $(i_A,\ell)$-alignment with $A$. We seek to show that $\MaxShiftAlign_{\ell,k}(A, B, i_A,i_B) \ge d$ with probability $1$. Note that the output of $\MaxShiftAlign_{\ell,k}$ is always at least $2\ell$, so we may assume that $d > 2k$. By definition of $(i_A,\ell)$-alignment, there exists $c \in [-\ell, \ell]$ such that $A[i_A + c, i_A+c+d-1] = B[i_B, i_B+d-1]$. Note that since $|c+i_A-i_B| \le 2k$, there exists $a \in [-2\sqrt{k}, 2\sqrt{k}]$  and $b \in \sqrt{k}$ such that $a\sqrt{k} + b = c+i_A-i_B$. In fact, we may take $a \in \left[\lfloor\frac{i_A-i_B-\ell}{\sqrt{k}}\rfloor, \lceil \frac{i_A-i_B+\ell}{\sqrt{k}}\rceil\right].$ Thus, since $b \in [-\sqrt{k}, \sqrt{k}] \subset [-\ell, \ell]$, we have that
  \begin{align*}
    B[i_B + \ell-b, i_B + d-\ell-1-b] &=  A[i_A+c + \ell-b, i_A+c + d-\ell-1-b]\\
                                       &= A[i_B + \ell + a\sqrt{k}, i_B + d-\ell-1+a\sqrt{k}].
  \end{align*}
  By applying Claim~\ref{claim:hash-random}, we have with probability $1$ that
  \begin{align*}
     \method{RetrieveRollingHash}&(A, S+a\sqrt{k}, H_{A,a\sqrt{k}}, i_B + \ell + a\sqrt{k}, d-\ell-1+a\sqrt{k})\\&= \method{RetrieveRollingHash}(B, S-b, H_{B,b}, i_B+\ell-b, i_B+d-\ell-1-b).
  \end{align*}
  Therefore, in the implementation of $\MaxShiftAlign_{\ell,k}(A, B, i_B)$, if $d_{\text{mid}} = d$, then $L_A$ and $L_B$ will have nontrivial intersection, so the output of the binary search will be at least $d$, as desired. Thus, $\MaxShiftAlign_{\ell,k}(A, B, i_B)$ will output at least the length of the maximal $(i_A,\ell)$-alignment.

  Second, we verify that $\MaxShiftAlign_{\ell,k}$ outputs an approximate $(i_A,\ell)$-alignment. Let $d$ be the output of $\MaxShiftAlign_{\ell,k}$, either $d = 2\ell$, in which case $B[i_B, i_B+d-1]$ trivially is in approximate $(i_A,\ell)$-alignment with $A$ or $d > 2\ell$. Thus, for that $d$, the binary search found that $L_A \cap L_B \neq \emptyset$ and so there exists $a \in \left[\lfloor\frac{i_A-i_B-\ell}{\sqrt{k}}\rfloor, \lceil \frac{i_A-i_B+\ell}{\sqrt{k}}\rceil\right], b \in [-\sqrt{k}, \sqrt{k}]$ such that
  \begin{align*}
     \method{RetrieveRollingHash}&(A, S+a\sqrt{k}, H_{A,a\sqrt{k}}, i_B + \ell + a\sqrt{k}, d-\ell-1+a\sqrt{k})\\&= \method{RetrieveRollingHash}(B, S-b, H_{B,b}, i_B+\ell-b, i_B+d-\ell-1-b).
  \end{align*}
  Applying Claim~\ref{claim:hash-random} over all at most $\tilde{O}(\sqrt{k}^2) = \tilde{O}(k)$ comparisons of hashes made during the algorithm, with probability at least $1 - 1/n^3$, we must have that
  \[
    \ED(A[i_B + \ell + a\sqrt{k}, d-\ell-1+a\sqrt{k}], B[i_B+\ell-b, i_B+d-\ell-1-b]) \le \ell.
  \]
  Let $c := a\sqrt{k} + b$, then we have that
  \[
    \ED(A[i_B+c + \ell-b, i_B+c+d-\ell-1-b], B[i_B+\ell-b, i_B+d-\ell-1-b]) \le \ell
  \]
  Therefore,
  \[
    \ED(A[i_B+c, i_B+c+d-1], B[i_B, i_B+d-1]) \le 3\ell,
  \]
  Note that
  \begin{align*}
    i_B + c &= i_B + a\sqrt{k} +b \\
            &\ge i_B + [(i_A-i_B-\ell) - \sqrt{k}] - \sqrt{k}\\
            &\ge i_A - 3\ell.
  \end{align*}
  Likewise, $i_B +c \le i_A + 3\ell$. Therefore,
  \[
    \ED(A[i_B+c, i_B+c+d-1], A[i_A, i_A+d-1]) \le 6\ell,
  \]
  since we need at most $3\ell$ insertions and $3\ell$ deletions to go between the two strings. By the triangle inequality we then have that
  \[
    \ED(A[i_A, i_A+d-1], B[i_B, i_B+d-1]) < 10\ell,
  \]
  as desired.
\end{proof}

\begin{theorem}\label{thm:no-preprocessing2}
  If $\ell \ge \sqrt{k}$, with no preprocessing, we can distinguish between $\ED(A, B) \le \tilde{O}(k)$ and $\ED(A, B) \ge \tilde{\Omega}(\ell k)$ in $\tilde{O}(\tfrac{(n+k^2)\sqrt{k}}{\ell})$ time with probability at least $1 - \frac{1}{n}$.
\end{theorem}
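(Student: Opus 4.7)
The plan is to assemble the three pieces we already have in hand: the correctness of the no-preprocessing implementation of $\MaxShiftAlign_{\ell,k}$ (Lemma~\ref{lem:two-sided-shiftalign}), the correctness of $\method{GreedyWave}$ when fed an approximately correct oracle (Lemmas~\ref{lem:jaggedwave-pseudo1} and \ref{lem:jaggedwave-pseudo2}), and a direct runtime count. Concretely, by Lemma~\ref{lem:two-sided-shiftalign} each call to $\MaxShiftAlign_{\ell,k}$ fails to be approximately correct with probability at most $1/n^3$; union-bounding over the $\tilde{O}(k^2/\ell)$ cells of the DP table and the $O(\log n)$ inner binary-search steps shows that all calls succeed with probability at least $1-1/n^2$. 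Then Lemma~\ref{lem:jaggedwave-pseudo1} gives SMALL whenever $\ED(A,B) \le k/(20\lceil \log n\rceil)=\tilde{\Omega}(k)$, and Lemma~\ref{lem:jaggedwave-pseudo2} gives LARGE whenever $\ED(A,B) > 10k\ell = \tilde{\Omega}(k\ell)$, each with probability $1-1/n$.

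For the preprocessing, I would note that $\method{ProcessA}_{\ell,k}$ and $\method{ProcessB}_{\ell,k}$ each invoke $\method{InitRollingHash}$ $O(\sqrt{k})$ times on shifts of the subsample $S$, which has expected size $\tilde{O}(n/\ell)$ under the sampling rate $\min(4\ln n/\ell, 1)$. Each $\method{InitRollingHash}$ runs in $\tilde{O}(|S|)$ time, so the total preprocessing cost is $\tilde{O}(n\sqrt{k}/\ell)$ in expectation. The DP table $h[i,j]$ for $\method{GreedyWave}$ has $(k+1)(2k/\ell+1) = O(k^2/\ell)$ cells, and each cell update is dominated by one call to $\MaxShiftAlign_{\ell,k}$.

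To bound that per-call cost, I would analyze the two inner loops separately. The $b$-loop always has $O(\sqrt{k})$ iterations. The $a$-loop ranges over $\left[\lfloor(i_A-i_B-\ell)/\sqrt{k}\rfloor, \lceil(i_A-i_B+\ell)/\sqrt{k}\rceil\right]$, whose length is $O(\ell/\sqrt{k})$; since $\ell \le k$, this is at most $O(\sqrt{k})$. Each iteration performs a $\tilde{O}(1)$-time $\method{RetrieveRollingHash}$ call, and the outer binary search adds only a $\log n$ factor, so each $\MaxShiftAlign_{\ell,k}$ call runs in $\tilde{O}(\sqrt{k})$ time. Summing over the DP table yields $\tilde{O}(k^{2.5}/\ell)$ for the main loop, and adding the preprocessing gives the claimed $\tilde{O}((n+k^2)\sqrt{k}/\ell)$.

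The main obstacle is purely a bookkeeping one: carefully showing that the $a$-loop contributes only $O(\ell/\sqrt{k})$ (rather than the naive $O(k/\sqrt{k}) = O(\sqrt{k})$) hashes, and then using $\sqrt{k} \le \ell \le k$ to keep both the $a$- and $b$-loops at $O(\sqrt{k})$ so that the $\tilde{O}((n+k^2)\sqrt{k}/\ell)$ bound falls out cleanly. Once this is observed, the remainder is a direct combination of the lemmas already proved in the appendix.
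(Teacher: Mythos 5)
Your proposal is correct and follows essentially the same route as the paper's proof: invoke Lemma~\ref{lem:two-sided-shiftalign} for approximate correctness of $\MaxShiftAlign_{\ell,k}$, combine with Lemmas~\ref{lem:jaggedwave-pseudo1} and~\ref{lem:jaggedwave-pseudo2} for correctness of $\method{GreedyWave}$, and then sum the $\tilde{O}(n\sqrt{k}/\ell)$ preprocessing cost with $\tilde{O}(k^2/\ell)$ oracle calls at $\tilde{O}(\sqrt{k})$ each. You actually supply more detail than the paper does (the explicit bound of $O(\ell/\sqrt{k}) \le O(\sqrt{k})$ on the $a$-loop and the union bound over calls), but the argument is the same.
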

\begin{proof}
By Lemma~\ref{lem:two-sided-shiftalign}, $\method{MaxShiftAlign}_{\ell,k}$ is approximately correct so by Lemmas~\ref{lem:jaggedwave-pseudo1} and~\ref{lem:jaggedwave-pseudo2} succeeds with high enough probability that $\method{GreedyMatch}$ outputs the correct answer with probability at least $1 - \frac{1}{n}$.

  Both $\method{ProcessA}_{\ell,k}$ and $\method{ProcessB}_{\ell,k}$ run in $\tilde{O}(\frac{n\sqrt{k}}{\ell})$ time. Further, $\method{MaxShiftAlign}_{\ell,k}$ runs in $\tilde{O}(\sqrt{k})$ time. Since $\method{GreedyWave}$ makes runs in $\tilde{O}(k^2/\ell)$ and makes that many calls to $\MaxShiftAlign_{\ell,k}$, we have that $\method{GreedyWave}$ runs in $\tilde{O}(\tfrac{(n+k^2)\sqrt{k}}{\ell})$ time.
\end{proof}

Setting $\ell = \tilde{\Theta}(k^{1-\eps})$ with $\eps < 1/2$, we get that distinguishing $k$ from $k^{2-\eps}$ can be done in $\tilde{O}(n k^{-1/2+\eps} + k^{3/2+\eps})$ time.

\subsubsection{$\MaxShiftAlign_{\ell,k}$ with One-Sided and Two-Sided Preprocessing}

For the one-sided preprocessing, we mimic \ref{sec:one-sided}. We preselect $S \subset [1,n]$ which each element included i.i.d.~with probability $q := \min(\frac{4\ln n}{\ell}, 1)$. Also assume that every multiple of $\ell$ (and $n-1$) is in $S$. This only increases the size of $S$ by $n/\ell$, and does not hurt the success probability of Claim~\ref{claim:hash-random}.

\begin{algorithm}[h]
\caption{$\method{OneSidedPreprocessA}_{\ell,k}(A)$}
  \begin{algorithmic}
    \STATE \textbf{for} $a$ \textbf{from} $-k$ \textbf{to} $k$
    \STATE \ \ \ $H_{A,a} \leftarrow \method{InitRollingHash}(A, S+a)$
    \STATE $\mathcal T_A \leftarrow$ list of $[\lfloor \log n\rfloor] \times [\tfrac{n}{\ell}] \times [-\frac{k}{\ell}-1, \frac{k}{\ell}+1]$ matrix of empty hash tables
    \STATE \textbf{for} $e_0$ \textbf{in} $[\lfloor \log n \rfloor]$
    \STATE \ \ \ \textbf{for} $a$ \textbf{from} $-k$ \textbf{to} $k$
    \STATE \ \ \ \ \ \ \textbf{for} $i$ \textbf{in} $((S+1) \cup (S - 2^{i_0}+1)) \cap [n]$
    \STATE \ \ \ \ \ \ \ \ \ $h \leftarrow \method{RetrieveRollingHash}(A, S+a, H_{A,a}, i+a, i+2^{e_0}-1+a)$
    \STATE \ \ \ \ \ \ \ \ \ add $h$ to $T_A[i_0, \lfloor i/\ell\rfloor+e_1, \lfloor a/\ell\rfloor+e_2]$ \textbf{for} $e_1,e_2 \in \{-1, 0, 1\}$.
    \RETURN $T_A$
  \end{algorithmic}
\label{alg:processA}
\end{algorithm}

\begin{claim}\label{claim:preproc-fast2}
  $\method{OneSidedPreprocessA}_{\ell,k}(A)$ runs in $\tilde{O}(\frac{nk}{\ell})$ time in expectation.
\end{claim}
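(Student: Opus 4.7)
The plan is to mirror the runtime analysis of Claim~\ref{claim:preproc-fast} almost verbatim, tracking the parameter change from sampling rate $\Theta(\log n/k)$ to $\Theta(\log n/\ell)$. First I would bound $|S|$: each element of $[n]$ is included independently with probability $\min(4\ln n/\ell, 1)$, and we further add all multiples of $\ell$ together with $n-1$. In expectation this gives $|S| = \tilde{O}(n/\ell) + O(n/\ell) = \tilde{O}(n/\ell)$.

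Next I would handle the first loop, which calls $\method{InitRollingHash}(A, S+a)$ once for each $a \in [-k,k]$. Since $\method{InitRollingHash}$ runs in $\tilde{O}(|S|)$ time, this block contributes $O(k) \cdot \tilde{O}(n/\ell) = \tilde{O}(nk/\ell)$ in expectation.

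For the triply-nested loop, there are $O(\log n)$ choices of $e_0$ and $O(k)$ choices of $a$; for each fixed pair the inner loop iterates over $i \in (S+1) \cup (S - 2^{e_0}+1)$, which has size $\tilde{O}(n/\ell)$ in expectation. Inside, the call to $\method{RetrieveRollingHash}$ is $\tilde{O}(1)$ and the $3 \times 3 = 9$ hash-table insertions indexed by $(e_1,e_2) \in \{-1,0,1\}^2$ are amortized $\tilde{O}(1)$. Multiplying out gives $O(\log n) \cdot O(k) \cdot \tilde{O}(n/\ell) \cdot \tilde{O}(1) = \tilde{O}(nk/\ell)$.

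Summing the two contributions yields the claimed $\tilde{O}(nk/\ell)$ total expected runtime. There is no real obstacle here beyond bookkeeping; the only thing to double-check is that forcing the multiples of $\ell$ into $S$ does not blow up $|S|$ beyond $\tilde{O}(n/\ell)$, which is immediate, and that the $3\times 3$ block of hash-table insertions is absorbed into the $\tilde{O}(1)$ per-iteration cost.
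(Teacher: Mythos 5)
Your proposal is correct and follows essentially the same accounting as the paper: $\tilde{O}(nk/\ell)$ for the $2k+1$ calls to $\method{InitRollingHash}$ on a set of expected size $\tilde{O}(n/\ell)$, plus $\tilde{O}(nk/\ell)$ for the triply-nested loop (you omit the $\tilde{O}(nk/\ell^2)$ cost of initializing the table $\mathcal T_A$, but that is dominated). If anything your bound on the nested loop is stated more carefully than the paper's, which appears to carry over an $\tilde{O}(n/k)$ factor from the analogous Claim~\ref{claim:preproc-fast} where $S$ was sampled at rate $\Theta(\log n/k)$ rather than $\Theta(\log n/\ell)$.
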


\begin{proof}
  Computing $\method{InitRollingHash}(A-a, S)$ takes $|S| = \tilde{O}(n/\ell)$ time in expectation. Thus, computing the $H_{A-a}$'s takes $\tilde{O}(\frac{nk}{\ell})$ time. Initializing the hash table takes $\tilde{O}(\frac{nk}{\ell^2})$ time. The other loops take (amortized) $\tilde{O}(1) \cdot O(k) \cdot \tilde{O}(n/k) = \tilde{O}(n)$ time.
\end{proof}

Before we call $\method{GreedyWave}$, we need to initialize the hash function for $B$ using $\method{OneSidedProcessB}_{\ell,k}(B)$. This takes $\tilde{O}(n/\ell)$ time in expectation.

\begin{algorithm}[h]
\caption{$\method{OneSidedProcessB}_{\ell,k}(B)$}
  \begin{algorithmic}
    \RETURN $H_{B} \leftarrow \method{InitRollingHash}(B, S)$
  \end{algorithmic}
\end{algorithm}

\begin{algorithm}[h]
\caption{$\method{OneSidedMaxShiftAlign}_{\ell,k}(A, B, i_A,i_B)$}
  \begin{algorithmic}
    \INPUT $A \in \Sigma^n, B \in \Sigma^n$, $\ell,k\le n$, $i_A,i_B \in [1, n]$ $|i_A-i_B| \le k$.
    \STATE \textbf{for} $d_0 \in [2^{\lfloor \log n\rfloor}, 2^{\lfloor \log n\rfloor-1}, \hdots, 1]$
    \STATE \ \ \ \textbf{if} $\method{RetrieveRollingHash}(B, S, H_B, i_B, i_B+d_0-1) \in T_A[\log d_0, \lfloor i_B/\ell\rfloor , \lfloor (i_A-i_B)/\ell\rfloor]$\\
    \STATE \ \ \ \ \ \ \textbf{then} \textbf{return} \ $d_0$
    \RETURN 0
  \end{algorithmic}
\label{alg:pseudoalign-onesidedpreproc}
\end{algorithm}

\begin{lemma}\label{lem:one-sided-shiftalign}
 $\method{OneSidedMaxShiftAlign}_{\ell,k}$ is approximately correct. 
\end{lemma}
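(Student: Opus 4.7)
My plan is to mirror the proofs of Lemma~\ref{lem:one-sided-pseudo-align} (one-sided preprocessing without shifts) and Lemma~\ref{lem:two-sided-shiftalign} (no preprocessing with shifts), combining their ideas for the shifted, one-sided setting. I verify the two halves of approximate correctness in turn.

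For property~1 (completeness), let $d'$ be maximal with $B[i_B, i_B + d' - 1] = A[i_A, i_A + d' - 1]$ and set $d = 2^{\lfloor \log d' \rfloor}$, so that $d > d'/2$. It suffices to show that the check at $d_0 = d$ succeeds, forcing the algorithm's output to be at least $d$. I need to exhibit $i' \in (S + 1) \cup (S - d + 1)$ and $a' \in [-k,k]$ with $i' + a' = i_A$ whose preprocessed hash was inserted into the cell $T_A[\log d, \lfloor i_B/\ell \rfloor, \lfloor (i_A - i_B)/\ell \rfloor]$. I would take $i'$ to be a change point within $\ell$ of $i_B$ that lies in the same ``constant region'' of $i \mapsto S \cap [i,i+d-1]$ as $i_B$; such an $i'$ exists because $S$ contains every multiple of $\ell$, making the change-point set $\ell$-dense. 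Setting $a' = i_A - i'$, the interval $[i' + a', i' + a' + d - 1]$ coincides with $[i_A, i_A + d - 1]$; because $i'$ and $i_B$ lie in the same constant region the two rolling hashes compare sampled positions in identical relative layout, and $A[i_A,\ldots] = B[i_B,\ldots]$ then forces equality of the two hash polynomials. Finally, $|i' - i_B| \le \ell$ and $|a' - (i_A - i_B)| \le \ell$ imply that $\lfloor i'/\ell \rfloor$ and $\lfloor a'/\ell \rfloor$ each differ from the query indices by at most one, so one of the nine $(e_1,e_2) \in \{-1,0,1\}^2$ offsets in the preprocessing routes this hash into exactly the cell the query reads.

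For property~2 (soundness), suppose the algorithm returns $d \ge 1$. The stored hash that was hit corresponds to some $(i', a')$ with $|i' - i_B| < 2\ell$ and $|a' - (i_A - i_B)| < 2\ell$, coming from the $\pm 1$ slack in the floor-rounded table coordinates. Union-bounding Claim~\ref{claim:hash-random}(2) over all preprocessed hashes, with probability at least $1 - 1/n^2$ the hash collision forces $\Ham(A[i' + a', i' + a' + d - 1], B[i_B, i_B + d - 1]) < \ell$, hence $\ED$ of these length-$d$ strings is strictly less than $\ell$. Since $|i' + a' - i_A| \le |i' - i_B| + |a' - (i_A - i_B)| < 4\ell$, shifting inside $A$ yields $\ED(A[i_A, i_A + d - 1], A[i' + a', i' + a' + d - 1]) \le 2|i_A - (i' + a')| < 8\ell$. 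The triangle inequality then gives $\ED(A[i_A, i_A + d - 1], B[i_B, i_B + d - 1]) < 9\ell \le 10\ell$, which is the required approximate shifted-$(i_A, \ell)$-alignment.

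The chief obstacle is reconciling three coordinate systems at once: (i) $i'$ must belong to the change-point set $(S + 1) \cup (S - d + 1)$ so that its hash was actually stored; (ii) $i'$ must lie in the same constant region as $i_B$ so that the two hashes compare the same polynomial; and (iii) the induced shift $a' = i_A - i'$ must fall in the preprocessed range $[-k, k]$, a condition that becomes tight when $|i_A - i_B|$ is close to $k$ and forces a careful choice of which side of $i_B$ to take $i'$ on. Once $i'$ is chosen to satisfy these three constraints simultaneously, the rest is a routine triangle-inequality chain.
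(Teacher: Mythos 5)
Your soundness half (property 2) is essentially the paper's argument and is fine: read $(i',a')$ off the $\pm 1$ slack in the table coordinates, apply Claim~\ref{claim:hash-random}(2) to get edit distance at most $\ell$ between the two hashed windows, pay $8\ell$ for the displacement $|i'+a'-i_A|\le 4\ell$, and conclude $\le 9\ell\le 10\ell$. The completeness half, however, contains a genuine error in the choice of shift. You set $a'=i_A-i'$ so that the stored window is exactly $A[i_A,i_A+d-1]$. But that stored hash samples this window at the absolute positions $S_0+a'$, where $S_0=S\cap[i',i'+d-1]=S\cap[i_B,i_B+d-1]$ by your constant-region choice of $i'$; relative to the window's start $i_A$ these are the offsets $S_0-i'$. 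The query hash samples $B[i_B,i_B+d-1]$ at offsets $S_0-i_B$. These relative layouts differ by $i_B-i'\neq 0$ in general, so your claim that the two polynomials ``compare sampled positions in identical relative layout'' is false: you would need $A[s+i_A-i']=B[s]=A[s+i_A-i_B]$ for all $s\in S_0$, which has no reason to hold. Hash equality therefore does not follow, and property 1 is not established.

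The fix is the one the paper uses: take the shift to be the displacement of the alignment itself, $a=i_A-i_B$ (more generally $i_A+c-i_B$), independent of $i'$. Then the hash one \emph{would} retrieve at the unstored location $(i_B,a)$ equals the query hash by Claim~\ref{claim:hash-random}(1) (equal substrings, identically shifted sample sets), and the hash actually \emph{stored} at the change point $(i',a)$ equals that one because $i'$ and $i_B$ lie in the same constant region of $i\mapsto S\cap[i,i+d-1]$ — the stored window $A[i'+a,i'+a+d-1]$ need not start at $i_A$, and that is harmless. This choice also dissolves the difficulty you flag at the end and leave unresolved: $a=i_A-i_B\in[-k,k]$ holds directly from the promise, whereas your $a'=i_A-i'$ can genuinely fall outside $[-k,k]$ when $|i_A-i_B|$ is close to $k$, and forcing $i'$ to the favorable side of $i_B$ can conflict with the constant-region requirement. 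Your remaining scaffolding ($\ell$-density of change points because multiples of $\ell$ lie in $S$, and the nine-cell $(e_1,e_2)$ insertion routing the stored hash to the queried cell) is correct and matches the paper.
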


\begin{proof}
  First,  consider the maximal $d \ge 1$ a power of two such that $B[i_B, i_B+d-1]$ has a $(i_A,\ell)$-alignment with $A$. We seek to show that $\method{OneSidedMaxShiftAlign}_k(A, B,i_A, i_B) \ge d$ with probability $1$. By definition of $(i_A,\ell)$-alignment, there exists $c\in [-\ell, \ell]$ such that $A[i_A+c, i_A+c+d'-1] = B[i_B, i_B+d'-1]$. Let $a := i_A+c-i_B$.
  By applying Claim~\ref{claim:hash-random}, we have with probability $1$ that
  \begin{align*}
     \method{RetrieveRollingHash}&(A, S+a, H_{A,a}, i_B + a, i_B+d-1+a)\\&= \method{RetrieveRollingHash}(B, S, H_B, i_B, i_B+d-1).
  \end{align*}
  Let $i_B'$ be the least element of $(S+1) \cup (S-d+1)$ which is at least $i_B$. Since every multiple of $\ell$ is in $S$ (as well as $n-1$), we have that $|i'_B - i_B| \le \ell$ and
  \begin{align*}
    \method{RetrieveRollingHash}&(A, S+a, H_{A,a}, i_B + a, i_B+d-1+a)\\
                                &=  \method{RetrieveRollingHash}(A, S+a, H_{A,a}, i'_B + a, i'_B+d-1+a)\\
    &\in  T_A[\log d, \lfloor i'_B/\ell\rfloor , \lfloor a/\ell\rfloor].
  \end{align*}
  Therefore, in the implementation of $\method{OneSidedMaxShiftAlign}_{\ell,k}(A, B, i_A, i_B)$, if $d_0 = d$, we have that \[
    \method{RetrieveRollingHash}(B, S, H_B, i_B, i_B+d-1) \in T_A[\log d, \lfloor i'_B/\ell\rfloor +e_1, \lfloor a/\ell\rfloor + e_2].
  \]
  for all $e_1,e_2 \in \{-1, 0, 1\}$. Since $|i'_B - i_B| \le \ell$ and $|a - (i_A-i_B)| \le \ell$, we have that
  \[
    \method{RetrieveRollingHash}(B, S, H_B, i_B, i_B+d-1) \in T_A[\log d, \lfloor i_B/\ell\rfloor , \lfloor (i_A-i_B)/\ell\rfloor].
  \]
  Thus, $\method{OneSidedMaxShiftAlign}_{\ell,k}(A, B, i_A, i_B)$ 
  will output at least more than half the length of the maximal shift $(i_A,\ell)$-alignment.

  Second, we verify that $\method{OneSidedMaxShiftAlign}_{\ell,k}$ outputs an approximate $(i_A,\ell)$-alignment. Let $d$ be the output of $\method{OneSidedMaxShiftAlign}_{\ell,k}$. Either $d = 0$, in which case $B[i_B, i_B+d-1]$ trivially is in approximate $(i_A,\ell)$-alignment with $A$ or $d \ge 1$. Thus, for that $d$, the search found that $\method{RetrieveRollingHash}(B, S, H_B, i_B, i_B+d-1) \in T_A[\log d, \lfloor i_B/\ell\rfloor, \lfloor (i_A-i_B)/\ell\rfloor]$. Thus, there exists $i_A'$ with $|\lfloor i_A'/\ell\rfloor - \lfloor i_B/\ell\rfloor| \le 1$, and  $a \in [-k, k]$ such that \begin{align}|\lfloor a/\ell\rfloor - \lfloor (i_A-i_B)/\ell\rfloor| \le 1\label{ineq:1233}\end{align} and
  \begin{align*}
     \method{RetrieveRollingHash}&(A, S+a, H_{A,a}, i'_A + a, i'_A+d-1+a)\\&= \method{RetrieveRollingHash}(B, S, H_B, i'_A, i'_A+d-1).
  \end{align*}
  Applying Claim~\ref{claim:hash-random} over all $\tilde{O}(k)$ potential comparisons of hashes made during the algorithm, with probability at least $1 - 1/n^3$, we must have that
  \[
    \ED(A[i'_A + a, i'_A+a+d-1], B[i_B, i_B+d-1]) \le \ell.
  \]
  By (\ref{ineq:1233}), \[|i'_A + a - i_A| \le |i'_A - i_B| + |a - (i_A-i_B)| \le 2\ell + 2\ell = 4\ell.\]
  Therefore,
  \[
    \ED(A[i_A, i_A + d-1], B[i_B,i_B+d-1]) \le 9\ell.
  \]
  Therefore, $B[i_B,i_B+d-1]$ has an approximate $(i_A,\ell)$-alignment with $A$, as desired.
  \end{proof}

  \begin{theorem}\label{thm:1-preprocessing2}
  For all $A, B \in \Sigma^n$. When $A$ is preprocessed for $\tilde{O}(nk/\ell)$ time in expectation, we can distinguish between $\ED(A, B) \le \tilde{O}(k)$ and $\ED(A, B) \ge \tilde{\Omega}(k\ell)$ in $\tilde{O}(\frac{n+k^2}{\ell})$ time with probability at least $1 - \frac{1}{n}$ over the random bits in the preprocessing (oblivious to $B$).
\end{theorem}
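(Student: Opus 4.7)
The plan is to instantiate $\method{GreedyWave}(A, B, K, \ell)$ with $K = \Theta(k \log n)$, having run $\method{OneSidedPreprocessA}_{\ell,k}(A)$ during the preprocessing phase and $\method{OneSidedProcessB}_{\ell,k}(B)$ at the start of the query phase. Choosing $K$ larger than $k$ by a logarithmic factor absorbs the $1/(20\lceil \log n\rceil)$ slack in Lemma~\ref{lem:jaggedwave-pseudo1}, so that the promised close case $\ED(A,B) \le \tilde{O}(k)$ lies within $K/(20\lceil \log n\rceil)$ and the far case $\tilde{\Omega}(k\ell)$ exceeds $10 K\ell$. All randomness used in the preprocessing (the subsample $S$ and the Rabin--Karp seed $x$) is fixed before $B$ is read, so the success probability is oblivious to $B$ as required.

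For correctness I would first invoke Lemma~\ref{lem:one-sided-shiftalign}, which says that $\method{OneSidedMaxShiftAlign}_{\ell,k}$ is approximately correct per query with failure probability at most $1/n^2$. A union bound over the $\tilde{O}(K^2/\ell)$ oracle calls made by $\method{GreedyWave}$ keeps approximate correctness intact throughout the run with probability at least $1-1/n$. I would then feed this into Lemmas~\ref{lem:jaggedwave-pseudo1} and~\ref{lem:jaggedwave-pseudo2}, which together guarantee that $\method{GreedyWave}$ reports SMALL whenever $\ED(A,B) \le K/(20\lceil \log n\rceil)$ and reports LARGE whenever $\ED(A,B) > 10 K\ell$, thereby yielding the desired $\tilde{O}(k)$ vs.\ $\tilde{\Omega}(k\ell)$ gap.

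For the runtime, Claim~\ref{claim:preproc-fast2} gives preprocessing of $A$ in $\tilde{O}(nK/\ell) = \tilde{O}(nk/\ell)$ expected time. The preprocessing of $B$ is a single $\method{InitRollingHash}$ call on the subsample $S$, costing $\tilde{O}(n/\ell)$ in expectation. The DP table inside $\method{GreedyWave}$ has $O(K) \times O(K/\ell) = \tilde{O}(k^2/\ell)$ entries; each entry performs a constant number of additions and at most one call to $\method{OneSidedMaxShiftAlign}_{\ell,k}$, which does a power-of-two search over $O(\log n)$ lengths with one $\method{RetrieveRollingHash}$ and one table-membership test per iteration, all $\tilde{O}(1)$. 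Summing gives query cost $\tilde{O}(n/\ell + k^2/\ell) = \tilde{O}((n+k^2)/\ell)$, as claimed.

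The only step that needs genuine attention is the probability bookkeeping under the blow-up from $k$ to $K = \Theta(k\log n)$: the number of Rabin--Karp retrievals is still only $\tilde{O}(k^2/\ell)$, and since each fails with probability at most $1/n^3$ by Claim~\ref{claim:hash-random}, the union bound comfortably preserves overall success probability $1-1/n$. Beyond this, the argument is essentially a verbatim adaptation of the proof of Theorem~\ref{thm:1-preprocessing} with $\method{GreedyWave}$ and $\method{OneSidedMaxShiftAlign}_{\ell,k}$ substituted for $\method{GreedyMatch}$ and $\method{OneSidedMaxAlign}_k$, so no new technical ingredients are required.
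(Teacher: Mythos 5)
Your proposal is correct and follows essentially the same route as the paper: correctness via Lemma~\ref{lem:one-sided-shiftalign} combined with Lemmas~\ref{lem:jaggedwave-pseudo1} and~\ref{lem:jaggedwave-pseudo2}, and the runtime via Claim~\ref{claim:preproc-fast2}, the $\tilde{O}(n/\ell)$ cost of hashing $B$, and the $\tilde{O}(k^2/\ell)$ DP entries each doing $\tilde{O}(1)$ work. The only difference is that you make explicit the $\Theta(\log n)$ rescaling of the gap parameter and the union bound over oracle calls, which the paper absorbs into the $\tilde{O}/\tilde{\Omega}$ notation.
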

\begin{proof}
By Lemma~\ref{lem:one-sided-shiftalign}, $\method{OneSidedMaxShiftAlign}_{\ell,k}$ is approximately correct so by Lemmas~\ref{lem:jaggedwave-pseudo1} and~\ref{lem:jaggedwave-pseudo2} succeeds with high enough probability that $\method{GreedyMatch}$ outputs the correct answer with probability at least $1 - \frac{1}{n}$.

As proved in Claim~\ref{claim:preproc-fast2}, the preprocessing runs in $\tilde{O}(nk/\ell)$ time. Also $\method{OneSidedProcessB}_{k,\ell}$ runs in $\tilde{O}(n/\ell)$ time. Further, $\method{OneSidedMaxShiftAlign}_{\ell,k}$ 
runs in $\tilde{O}(1)$ time, as performing the power-of-two search, computing the hash, and doing the table lookups are $\tilde{O}(1)$ operations. Therefore, $\method{GreedyWave}$ runs in $\tilde{O}(k^2/\ell)$ time. Thus, the whole computation takes $\tilde{O}(\frac{n+k^2}{\ell})$ time.
\end{proof}

Thus, if $\ell = k^{1-\eps}$, the proprocessing is $\tilde{O}(n \cdot k^{\eps})$, and the runtime otherwise is $\tilde{O}(n/k^{1-\eps} + k^{1+\eps}) $.

If we are in the two-sided preprocessing model, we can run both $\method{OneSidedPreprocessA}_{\ell,k}$ and $\method{OneSidedProcessB}_{\ell,k}$ for both $A$ and $B$. Then, we can just run $\method{GreedyWave}$ which runs in $\tilde{O}(k^2/\ell)$ time. Thus, we have the following corollary.
We have an implication for two-sided preprocessing as a corollary.

\begin{corollary}\label{thm:2-preprocessing2}
  For all $A, B \in \Sigma^n$. When $A$ and $B$ are preprocessed for $\tilde{O}(nk/\ell)$ time in expectation, we can distinguish between $\ED(A, B) \le \tilde{O}(k)$ and $\ED(A, B) \ge \tilde{\Omega}(k\ell)$ in $\tilde{O}(\frac{k^2}{\ell})$ time with probability at least $1 - \frac{1}{n}$ over the random bits in the preprocessing.
\end{corollary}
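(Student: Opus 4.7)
The plan is to observe that the $\tilde{O}((n+k^2)/\ell)$ query time in Theorem~\ref{thm:1-preprocessing2} splits into two parts: an $\tilde{O}(n/\ell)$ contribution from invoking $\method{OneSidedProcessB}_{\ell,k}$ at query time, and an $\tilde{O}(k^2/\ell)$ contribution from running $\method{GreedyWave}$. In the two-sided setting we can simply shift the former computation into the preprocessing phase, which yields the claimed query time $\tilde{O}(k^2/\ell)$.

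Concretely, I would preprocess $A$ with $\method{OneSidedPreprocessA}_{\ell,k}$ (building the tables $T_A$) and independently preprocess $B$ with $\method{OneSidedProcessB}_{\ell,k}$ (building the rolling hash $H_B$); optionally, one can also build $T_B$ symmetrically. By Claim~\ref{claim:preproc-fast2}, constructing $T_A$ costs $\tilde{O}(nk/\ell)$ in expectation, while $H_B$ costs an additional $\tilde{O}(n/\ell)$, so the total preprocessing is $\tilde{O}(nk/\ell)$, matching the corollary's bound.

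At query time, I would invoke $\method{GreedyWave}(A, B, k, \ell)$ exactly as in the one-sided case, but using the already-computed $T_A$ and $H_B$. Each call to $\method{OneSidedMaxShiftAlign}_{\ell,k}$ performs $O(\log n)$ hash retrievals and table lookups, each of cost $\tilde{O}(1)$. Since $\method{GreedyWave}$ fills a table of size $O(k) \times O(k/\ell)$ with a constant amount of work plus one oracle call per cell, the total query-phase cost is $\tilde{O}(k^2/\ell)$.

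For correctness, Lemma~\ref{lem:one-sided-shiftalign} ensures that $\method{OneSidedMaxShiftAlign}_{\ell,k}$ is approximately correct, and then Lemmas~\ref{lem:jaggedwave-pseudo1} and~\ref{lem:jaggedwave-pseudo2} imply that $\method{GreedyWave}$ distinguishes $\ED(A,B) \le \tilde{O}(k)$ from $\ED(A,B) \ge \tilde{\Omega}(k\ell)$ with probability at least $1 - 1/n$ over the random bits used during preprocessing. I do not foresee a genuine obstacle: the result is essentially a bookkeeping observation on top of Theorem~\ref{thm:1-preprocessing2}, amortizing the previously query-time construction of $H_B$ into the preprocessing budget. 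The only small care needed is to confirm that all randomness (the shared sample set $S$, the hash exponent $x$, and the prime $p$) can be fixed during preprocessing and reused at query time without breaking the success-probability bounds of Claim~\ref{claim:hash-random}, which follows from a union bound over the $\tilde{O}(k^2/\ell)$ hashes ever consulted by $\method{GreedyWave}$.
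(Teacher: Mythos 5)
Your proposal is correct and matches the paper's own (very brief) argument: run the one-sided preprocessing routines during the preprocessing phase so that only $\method{GreedyWave}$, costing $\tilde{O}(k^2/\ell)$, remains at query time. The only cosmetic difference is that the paper runs both $\method{OneSidedPreprocessA}_{\ell,k}$ and $\method{OneSidedProcessB}_{\ell,k}$ on each of $A$ and $B$ (since each string is preprocessed obliviously to its eventual role), but this does not change the bounds.
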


\section{Omitted Proofs}\label{app:omit}

\subsection{Proof of Lemma~\ref{lem:decomp}}

\begin{proof}
  We prove this by induction on $k$. If $k=0$, then we can partition $A$ and $B$ each into a single part which are matched.  Assume we have proved the theorem for $k\le k_0$. Consider $A$ and $B$ with $\ED(A, B) = k_0+1$. Thus, there exists $B' \in \Sigma^*$ such that $\ED(A, B') = k_0$ and $\ED(B', B) = 1$. By the induction hypothesis, $B'$ can be partitioned into intervals $I_1, \hdots, I_{2k_0-1}$ that are each of length at most $1$ or are equal to some interval of $A$ up to a shift of $k_0$. We now break up into cases.
  
  \textbf{Case 1.} A character of $B'$ is substituted to make $B$. Let $I_{j_0}$ be the interval this substitution occurs in. We split $I_{j_0}$ into three (some possibly empty) intervals $I_{j_0}^{(1)}, I_{j_0}^{(2)}, I_{j_0}^{3)}$, the intervals before, at, and after the substitution. We have the partition of $B$ into $2k_0+1$ intervals: $I_1, \hdots, I_{j_0-1}, I_{j_0}^{(1)}, I_{j_0}^{(2)}, I_{j_0}^{3)}, I_{j_0+1}, \hdots, I_{2k_0-1}$. Every interval is of length at most $1$ or corresponds to a equal substring of $A$ up to a shift of $k_0 \le k_0+1$. 

  \textbf{Case 2.}  A character of $B'$ is deleted to make $B$.  Let $I_{j_0}$ be the interval this deletion occurs in. We split $I_{j_0}$ into three intervals $I_{j_0}^{(1)}, I_{j_0}^{(2)}, I_{j_0}^{3)}$, the intervals before, at, and after the deletion (the middle interval is empty). We have the partition of $B$ into $2k_0+1$ intervals: $I_1, \hdots, I_{j_0-1}, I_{j_0}^{(1)}, I_{j_0}^{(2)}, I_{j_0}^{3)}, I_{j_0+1}-1, \hdots, I_{2k_0-1}-1$. Every interval is of length at most $1$ or corresponds to a equal substring of $A$ up to a shift of $k_0+1$. 

  \textbf{Case 3.} A character of $B'$ is inserted to make $B$. Let $I_{j_0}$ be the interval this insertion occurs in, or if the insertion is between intervals, take either adjacent interval. We split $I_{j_0}$ into three intervals $I_{j_0}^{(1)}, I_{j_0}^{(2)}, I_{j_0}^{(3)}$, the intervals before, at, and after the intersection. We have the partition of $B$ into $2k_0+1$ intervals: $I_1, \hdots, I_{j_0-1}, I_{j_0}^{(1)}, I_{j_0}^{(2)}, I_{j_0}^{3)}, I_{j_0+1}+1, \hdots, I_{2k_0-1}+1$. Every interval is of length at most $1$ or corresponds to a equal substring of $A$ up to a shift of $k_0+1$.

  In all three cases, if the interval $I_{j_0}$ broken up in $B$ corresponded to an interval in $A$ according to $\pi$, then the interval of $A$ can be broken up in the analogous fashion with $\pi$ suitably modified. If $I_{j_0}$ was an unmatched interval, add two empty intervals to $A$'s partition. 
\end{proof}

\end{document}